\title{Turing Machines with Two-level Memory: A Deep Look into the Input/Output Complexity\thanks{This work was supported by the National Natural Science Foundation of China under grants 61832003, 61972110 and U1811461.}}
\titlerunning{TM-TLM: A Deep Look into the IO Complexity}
\author{Hengzhao Ma\inst{\dag}\textsuperscript{\Letter} \and Jianzhong Li\inst{\dag}\textsuperscript{\Letter} \and Xiangyu Gao\inst{\ddag\dag} \and Tianpeng Gao\inst{\ddag\dag}
}
\authorrunning{H. Ma, J. Li, X. Gao, and T. Gao}
\institute{
	\dag\;Shenzhen Institutes of Advanced Technology, Chinese Acadamy of Sciences\\
	\ddag\; Harbin Institute of Technology\\
	\Letter\; \{hz.ma,lijzh\}@siat.ac.cn
}
\begin{document}

\maketitle

\begin{abstract}
The input/output complexity, which is the complexity of data exchange between the main memory and the external memory, has been elaborately studied by a lot of former researchers. However, the existing works failed to consider the input/output complexity in a computation model point of view. In this paper we remedy this by proposing three variants of Turing machine that include external memory and the mechanism of exchanging data between main memory and external memory. Based on these new models, the input/output complexity is deeply studied. We discussed the relationship between input/output complexity and the other complexity measures such as time complexity and parameterized complexity, which is not considered by former researchers. We also define the external access trace complexity, which reflects the physical behavior of magnetic disks and gives a theoretical evidence of IO-efficient algorithms. 
\keywords{Computational Theory \and Computational Model \and Input/output Complexity}
\end{abstract}

\section{Introduction}
The concept of \textit{data intensive computing} originates in 1980s \cite{alexander1988process,copeland1988data}, and has been a hot research field since then \cite{gorton2008data,johnston1998high,kouzes2009changing}. As we are entering the era of big data, it is becoming more and more common to deal with data up to petabytes in many research fields, such as artificial intelligence \cite{raju2020big}, bio-informatics \cite{elworth2020petabytes}, data warehouse \cite{pandis2021evolution}, and so on, which puts even more importance on data intensive computing. In such data intensive computing tasks, the storage and transportation of the data often become the inevitable bottleneck. The reason is that the massive data must be stored in external memory such as hard disks or solid state disks, but the speed of the external memory is usually one or two order of magnitudes slower than that of CPU and main memory. Henceforth, it is important to consider the complexity of data exchanging between main memory and external memory in data intensive computing, which is the well-known input/output complexity. Throughout this paper we will use IO for abbreviation of input/output.

The \textit{computational models} are the canonical tool to study the computational complexity. A computational model formally defines how the computation proceeds on it, and the computational complexity is defined based on different aspects of the computation procedure. For example, the serial time and space complexity is based on the Turing machine, and the parallel time complexity is based on the PRAM model and logic circuit model. However, all classical computational models do not have the ability to model the IO operations and thus do not support the analysis of IO complexity. For example, the classical Turing machine only models the main memory computation. There is no representation of the external memory nor the mechanism of exchanging data between main memory and external memory in Turing machine. 

To solve the disadvantages of classical computational models for analyzing IO operation, several models with multi-level of memory have been proposed, such as the hierarchical memory model \cite{aggarwal1987model} and uniform memory hierarchy model \cite{alpern1994uniform}. However, these models are cost models rather than computational models, which do not formally define how the computation is executed on them. These models only focus on defining the cost parameters of specified aspects of the model, and the goal is to calculate the total cost of running an algorithm on it.  Taking the hierarchical memory model \cite{aggarwal1987model} as an example, it only defines that the cost of accessing a memory location is proportional to the length of the address. Under the parameters defined in \cite{aggarwal1987model}, the Fast Fourier Transform algorithm which has $O(n\log{n})$ time complexity on RAM model, will have $O(n\log{n}\log{\log{n}})$ time complexity. In such a sense, the complexity calculated  under these cost models is a total cost of main memory computation and IO operation, which can not separate the time complexity and IO complexity. However, understanding the IO complexity is more important in data intensive computing, but these cost models \cite{aggarwal1987model,alpern1994uniform} fail to do so.

The most well-acknowledged model for analyzing the IO complexity is proposed in \cite{aggarwal1988input}, which consists of four parameters $N,M,B$ and $P$. 
$N$ is the number of records in the input, $M$ is the size of the main memory, 
$B$ is the block size, and $P$ is the number of blocks that can be transferred concurrently. 
The IO complexity is considered as the number of IO operations to fulfill the computation task. Though well-known and well-studied, the model in \cite{aggarwal1988input} does not specify how the computation is executed on it like most of the cost models. Only the mechanism of data exchange between main memory and external memory is explicitly defined. Therefore, it focuses only on the IO complexity, but looses the insight on the relationship between the IO complexity and other complexities such as the time and space complexity.

To remedy the defects of existing models for analyzing the IO complexity, this paper proposes new computational models that can accurately describe the computation procedure involving main memory computation and IO operation, and thus can analyze the time, space and IO complexity simultaneously. 
Specifically, three new variants of Turing machine are proposed in this paper. Based on these new models, we also study the IO complexity in multiple aspects. 

(1) The first model, the Turing machine with two-level memory (TM-TLM), generalizes the Turing machine by equipping it with external memory and the ability to exchange data between main memory and external memory. The time, space and IO complexity of TM-TLM is defined, and the relationship between IO complexity and other kinds of complexity is studied, including the time complexity and parameterized complexity.

(2) The second model is the Random Access Turing machine with two-level memory (RATM-TLM). It is a generalization of the RATM \cite{gao2020recognizing} model which is designed to support sub-linear time computation. The sub-linear IO complexity is  discussed based on RATM-TLM. 

(3) Finally the Random Access Turing machine with Blocking-IO (RATM-BIO) is proposed, which explicitly models the cost of retrieving data on external memory and reflects the behavior of hard disks. Based on RATM-BIO, the external memory access trace complexity is defined, which models the cost of retrieving data on external memory and gives an theoretical point of view of IO-efficient algorithms.

The three proposed models have different dedicated usage. The TM-TLM is the basic model since it directly generalizes the classical Turing machine. It is easier to use when analyzing the IO complexity and time complexity, and most of the results in this paper are based on TM-TLM. However, the TM-TLM can not be used to study the sub-linear time and IO complexity, and this is where the RATM-TLM should be used since it has the power of random access. Both TM-TLM and RATM-TLM consider the IO operations as special oracles and assume that an IO operation takes one unit of time. This simplifies the analysis of IO complexity, but looses the details to reflect the behavior of realistic external memory. Therefore, the RATM-BIO explicitly models the behavior of the external memory, where the  pattern of external memory access can influence the cost of external memory access. In this way, RATM-BIO provides the ability to analyze the IO-efficient algorithms.

The rest of the paper is organized as follows. We first go over the related works in Section \ref{subsec:rwork}.
The TM-TLM is defined in Section \ref{sec:tm-tlm}, and the relationship between IO complexity and other complexity measures such as time complexity and parameterized complexity are discussed. In Section \ref{sec:ratm-tlm}, the RATM-TLM is defined, and the sub-linear time and IO complexity is studied. Section \ref{sec:RATM-BIO} defines the RATM-BIO, and discusses the external access trace complexity. Section \ref{sec:usage} discusses the usage of the three models by raising some concrete examples. Finally Section \ref{sec:conc} concludes the paper.

\subsection{Related Works}\label{subsec:rwork}
\subsubsection{Two-level memory model and IO complexity.}
We have mentioned the model proposed in \cite{aggarwal1988input}, and this model is later generalized as the Parallel Disk Model (PDM). There are two variants of PDM. One assumes that there are $D$ channels inside a single disk that can transfer data simultaneously, and the other considers that there are $D$ independent disks working together to serve one CPU. Armen et. al. \cite{armen1996bounds} proved that the power of the former one is strictly stronger than the latter one. The PDM model has been the standard model to analyze IO complexity, and the IO complexity of a variety of problems have been studied. The IO complexity of four basic operations on external memory is studied in \cite{vitter2001external}, including scan, search, sort and output. The researchers also studied the IO complexity of more complicated problems, such as triangle enumeration \cite{pagh2014triangle}, transitive closure \cite{ullman1991input}, matrix multiplication \cite{pagh2014matrix} and so on. As we mentioned before, the PDM model focuses only on the IO complexity, but can not analyze the IO complexity and other complexities simultaneously.

\subsubsection{Models with multi-level memory.}
Although classical computational models rarely consider the external memory, there exist many cost models that do. Some of them even consider a hierarchy of memories, reflecting the modern computer architecture which consists of multi-level cache, main memory and external memory. There are several models that consider multi-level memory, but they differ in the way of modeling the cost of accessing different level of memory.

The Hierarchical Memory Model \cite{aggarwal1987model} is an early work in this direction. It assumes $k$ levels of memory, each containing $2^k$ locations, and access to a location $x$ takes $\lceil \log{x}\rceil$ time. A later model, the Uniform Memory Hierarchy model (UMH) \cite{alpern1994uniform}, relates the cost of memory access to the memory level number, not the memory address. It also includes several other parameters, such as the block size and the bandwidth of each level of the memory, which increases the hardness to analyze the complexity of algorithms running on this model. Another model $DRAM(h,k)$ \cite{zhang2003dram} is a parallelized model which considers $k$ threads cooperating on $h$ levels of shared memory. Unlike the UMH model, where the cost of accessing data in the same level of memory is the same, $DRAM(h,k)$ considers that the cost of memory access is influenced by the memory access pattern, such as temporal and spatial locality, contiguous and non-contiguous accesses. Thus, different implementations of an algorithm may have different memory access cost on $DRAM(h,k)$ model.

These models with multi-level memory provide details that represent the realistic architecture of modern computers. However, more details make the model more complicated, and make it more difficult to use these models to analyze the performance of algorithms. Actually, two-level memory is enough to analyze the IO complexity.

\subsubsection{IO-efficient algorithms.}
In most of the above models, the IO complexity is modeled as the number of IO operations multiplying the cost of a single IO operation, and the cost of each IO operation is assumed to be the same. However, as pointed out in \cite{zhang2003dram}, the behavior of the external memory is radically different with that of the main memory, where the cost of an IO operation may be significantly affected by the access pattern. For example, reading or writing data on contiguous blocks in external memory is a lot faster than reading or writing data on randomly distributed blocks \cite{ruemmler1992unix,simitci1998comparison}.  
Thus, it is very important to design algorithms that is IO-efficient by taking the characteristic of external memory into consideration. A widely used technique for IO-efficient algorithm is to turn a set of non-contiguous accesses into contiguous access in a batched manner, e.g., the Log Structure Merge (LSM) tree \cite{o1996log}. See \cite{maheshwari2003survey} for a good survey of IO-efficient algorithms.

\section{Turing Machine with two-level memory}\label{sec:tm-tlm}

\subsection{Definition of Turing Machine with two-level memory}
As shown in Figure \ref{fig:tm-tlm}, a Turing Machine with two-level memory, TM-TLM for short, has 3-tapes. The first one is the limited main memory tape, the second one is the unlimited external memory tape, and the third one is the address tape for the external memory. 
The IO operations of TM-TLM are modeled as two oracles that can execute the \textit{Read} and \textit{Write} operations in the following way. 
There are two sets of special states in TM-TLM that are \textit{Read} states and \textit{Write} states. When TM-TLM enters a \textit{Read} state, it will write an address $addr$ to the address tape, and the content of the cell in the main memory tape where the head of the main memory tape points to will be replaced by the content at address $addr$ in the external memory tape. For a \textit{Write} state, the TM-TLM will write an address $addr$ to the address tape, and the content of the cell at address $addr$ in the external tape will be replaced by the content of the cell in the main memory tape where the head of the main memory tape points to. 
The \textit{Read} and \textit{Write} operations are included in the transition functions of TM-TLM. 
The head can shift left or right or stay in the current cell after a \textit{Read} or \textit{Write} operation. It is assumed that the \textit{Read} and \textit{Write} operations are executed in one unit of time, and they can be regarded as two special oracles.
TM-TLM is formally defined in the following Definition \ref{def:tm-tlm}.

\begin{figure}[H]
	\centering
	\includegraphics[width=0.8\linewidth]{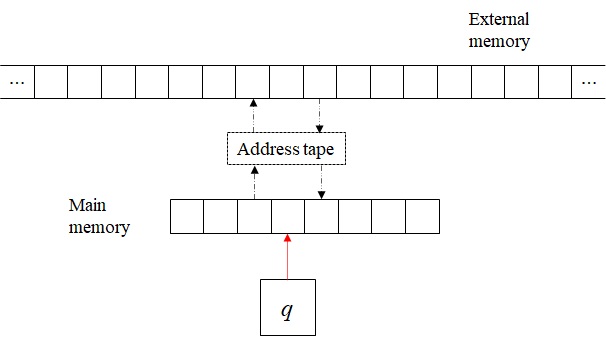}
	\caption{TM-TLM}\label{fig:tm-tlm}
\end{figure}

\begin{definition}\label{def:tm-tlm}
	A TM-TLM is a system $\{Q,\Sigma, \Gamma, M, \delta, Read, Write, B,q_r,q_w,q_f\}$ where \\
	$Q$ is the finite set of states,\\
	$\Sigma$ is the finite set of input symbols,\\
	$\Gamma\supseteq \Sigma$ is the finite set of tape symbols,\\
	$M$ is the size of the finite main memory tape,\\
	$\delta$ is the transition function: $Q\times \Gamma \rightarrow  
	Q\times\Gamma \times\{L,S,R\}$,\\
	$Read\subset \delta$ is the read operations described in the fist paragraph of this section,\\ 
	$Write\subset \delta$ is the write operations described in the fist paragraph of this section,\\
	$B\in \Gamma\setminus \Sigma$ is the blank symbol,\\
	$q_r\subset Q$ is the read states defined in the fist paragraph of this section, \\
	$q_w\subset Q$ is the write states defined in the fist paragraph of this section, and\\
	$q_f\in Q$ is the accepting state.\\
	
\end{definition}

Next we define the complexity measures of TM-TLM. Since the transitions of a TM-TLM include IO operations, it is necessary to consider the IO complexity of a TM-TLM.

\begin{definition}[Time Complexity]
	Given a TM-TLM $\mathcal{M}$ and an input string $x$, the time of $\mathcal{M}$ on $x$, denoted as $T_{\mathcal{M}}(x)$, is the number of transition functions except for $Read$ and $Write$ operations executed during the computation of $\mathcal{M}$ on $x$. Further, if  $T_{\mathcal{M}}(x)=O(T(n))$ for almost all $n\in Z^+$ and any $|x|=n$, we say that the time complexity of $\mathcal{M}$ is $O(T(n))$.
\end{definition}

\begin{definition}[IO Complexity]
	Given a TM-TLM $\mathcal{M}$ and an input string $x$, the IO time of $\mathcal{M}$ on $x$, denoted as $IO_{\mathcal{M}}(x)$, is the number of $Read$ and $Write$ operations executed during the computation of $\mathcal{M}$ on $x$. Further, if  $IO_{\mathcal{M}}(x)=O(IO(n))$ for almost all $n\in Z^+$ and any $|x|=n$, we say that the IO complexity of $\mathcal{M}$ is $O(IO(n))$.
\end{definition}

\begin{definition}[Space Complexity]
	Given a TM-TLM $\mathcal{M}$ and an input string $x$, the space of $\mathcal{M}$ on $x$, denoted as $S_{\mathcal{M}}(x)$, is the number of cells used on the external tape during the computation of $\mathcal{M}$ on $x$. Further, if  $S_{\mathcal{M}}(x)=O(S(n))$ for almost all $n\in Z^+$ and any $|x|=n$, we say that the space complexity of $\mathcal{M}$ is $O(S(n))$.
\end{definition}

Note that the space complexity of TM-TLM is actually the external space complexity. The main memory complexity is a fixed parameter of TM-TLM, since the size of the main memory is $M$.
	
\begin{definition}
	A $M$-memory $(T(n),IO(n))$-time $S(n)$-space TM-TLM $\mathcal{M}$ is a TM-TLM such that, the size of the main memory of $\mathcal{M}$ is $M$, the time complexity of $\mathcal{M}$ is $O(T(n))$, the IO complexity of $\mathcal{M}$ is $O(IO(n))$, and the space complexity of $\mathcal{M}$ is $O(S(n))$.
\end{definition}

Similar with the classical Turing machine, the time and IO complexity is usually more important than the space complexity for TM-TLM, and thus the space complexity is usually omitted to describe a TM-TLM. 
Besides, the time and IO complexity of a $M$-memory TM-TLM may include $M$ as a parameter. For simplification, we will omit the parameter $M$ unless necessary in the rest of the paper. Throughout this paper it is assumed that $M<n$ since IO complexity will be meaningless if $M \geq n$.

\subsection{TM-TLM v.s. Turing Machine}
\begin{lemma}\label{lema:1-memory-simulation}
	If a function is computable by a $M$-memory $(T(n),IO(n))$-time TM-TLM, then there is a $1$-memory $(T(n),T(n)+IO(n))$-time TM-TLM that can compute $f$.
\end{lemma}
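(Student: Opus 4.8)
The plan is to simulate a $M$-memory TM-TLM $\mathcal{M}$ by a $1$-memory TM-TLM $\mathcal{M}'$, where the $M$ cells of $\mathcal{M}$'s main memory tape are stored on the external memory tape of $\mathcal{M}'$ instead. Concretely, I would reserve the first $M$ cells of $\mathcal{M}'$'s external tape to hold a faithful copy of $\mathcal{M}$'s main memory tape contents, and shift the "real" external memory of $\mathcal{M}$ to occupy external addresses $M, M+1, M+2, \dots$ on $\mathcal{M}'$. The single main-memory cell of $\mathcal{M}'$ serves as a one-symbol working register: to simulate one step of $\mathcal{M}$, $\mathcal{M}'$ first reads the symbol currently under $\mathcal{M}$'s main-memory head (from the appropriate one of the first $M$ external cells) into its single main-memory cell, then consults the transition function of $\mathcal{M}$ to decide the action, then writes back the updated symbol.

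The key bookkeeping step is that $\mathcal{M}'$ must remember, in its finite control, the position $j \in \{0,1,\dots,M-1\}$ of $\mathcal{M}$'s main-memory head. Since $M$ is a fixed constant of the machine $\mathcal{M}$, this is only a constant amount of extra state, so $\mathcal{M}'$'s state set is $Q \times \{0,\dots,M-1\}$ (together with some auxiliary control states for carrying out the multi-phase simulation of a single step), which is still finite. First I would describe the initialization: copy the input $x$ from where it sits into external addresses $M, M+1, \dots$ (or, more simply, reindex so the input already lives there), and zero out / blank the first $M$ external cells that represent $\mathcal{M}$'s main memory. Then I would describe the step-by-step simulation:

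\begin{itemize}
\item To simulate a non-IO transition of $\mathcal{M}$: $\mathcal{M}'$ issues a \emph{Read} from external address $j$ (the remembered head position) to load the current symbol into its one main-memory cell; it then applies $\delta$; it issues a \emph{Write} to external address $j$ to store the new symbol; and it updates the remembered position $j$ according to the $\{L,S,R\}$ move. This costs $O(1)$ time transitions of $\mathcal{M}'$ and $2$ IO operations of $\mathcal{M}'$ per simulated time step of $\mathcal{M}$.
\item To simulate a \emph{Read} of $\mathcal{M}$ from address $addr$: $\mathcal{M}'$ writes $addr + M$ to its address tape and issues a \emph{Read}, loading the external content into its single cell, then issues a \emph{Write} to external address $j$. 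This is $O(1)$ IO operations of $\mathcal{M}'$ per IO step of $\mathcal{M}$; similarly for simulating a \emph{Write} of $\mathcal{M}$.
\item Address arithmetic: computing $addr+M$ on the address tape is a fixed addition of the constant $M$, doable with $O(1)$ time transitions (the carry propagates, but $M$ is constant so this is bounded; at worst it is $O(\log addr)$, which I will argue is dominated by, or foldable into, the time already spent by $\mathcal{M}$ to write $addr$ — I will handle this carefully below).
\end{itemize}

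The main obstacle I anticipate is the accounting, not the construction: I must verify that the new machine is genuinely a TM-TLM meeting the stated bounds. Each of the $T(n)$ non-IO steps of $\mathcal{M}$ becomes $O(1)$ non-IO steps plus $O(1)$ IOs of $\mathcal{M}'$, contributing $O(T(n))$ to $\mathcal{M}'$'s time and $O(T(n))$ to $\mathcal{M}'$'s IO count. Each of the $IO(n)$ IO steps of $\mathcal{M}$ becomes $O(1)$ non-IO steps plus $O(1)$ IOs of $\mathcal{M}'$, contributing $O(IO(n))$ to time and $O(IO(n))$ to IO count — but one must be slightly careful that address-tape manipulation (writing $addr+M$) does not blow up the time bound; since $\mathcal{M}$ itself had to write $addr$ onto its own address tape during that step, the extra constant offset adds only $O(1)$ amortized work per such step, keeping $\mathcal{M}'$'s time at $O(T(n))$. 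Summing, $\mathcal{M}'$ runs in time $O(T(n))$ — wait, this must be reconciled with the claimed $T(n)$ time and $T(n)+IO(n)$ IO: the point is that $\mathcal{M}'$'s \emph{time} is $O(T(n))$ (the non-IO work is a constant blowup of $\mathcal{M}$'s non-IO work), while $\mathcal{M}'$'s \emph{IO count} is $O(T(n) + IO(n))$ because \emph{every} simulated step of $\mathcal{M}$, whether it was a time step or an IO step, now requires a constant number of IOs to fetch/store the working symbol. I would conclude by noting the initialization overhead (copying/reindexing the input) is $O(n)$ time and $O(n)$ IOs, which is absorbed into $O(T(n))$ and $O(T(n)+IO(n))$ since $T(n) = \Omega(n)$ for any machine that reads its whole input, and then observe that $\mathcal{M}'$ accepts exactly when $\mathcal{M}$ does, so it computes the same function $f$.
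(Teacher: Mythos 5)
Your proposal is correct and follows essentially the same route as the paper's proof: both store the $M$ main-memory cells in a reserved region of the external tape, keep the simulated head position in the finite control (the paper calls it a register), and simulate each non-IO transition by two IO operations plus $O(1)$ transitions and each IO operation by two IO operations, yielding the $(T(n), T(n)+IO(n))$ bounds. Your treatment of the address offset and initialization is more careful than the paper's, but the construction is the same.
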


\begin{proof}
	Given a $M$-memory TM-TLM $\mathcal{M}$, the $1$-memory TM-TLM $\mathcal{M}'$ simulates $\mathcal{M}$ as follows. We use $M$ cells on the external tape of $\mathcal{M}'$ to simulate the main memory of $\mathcal{M}$. We call the $M$ cells as main memory mapping cells, and the other cells as external memory mapping cells. The main memory mapping cells are put on the left of the leftmost cell of external memory mapping cells, and a splitter symbol $\#$ is used to split the two parts, where $\#$ is not used anywhere else. The only main memory cell of $\mathcal{M}'$ is used to transfer data between the main memory mapping cells and external memory mapping cells. $\mathcal{M}'$ also has an register to remember the position of the main memory cell where the head of $\mathcal{M}$ points to.
	
	For each transition of $\mathcal{M}$, $\mathcal{M}'$ simulates it by two IO operations and one transition. Since the transition of $\mathcal{M}$ affects one main memory cell, the two IO operations of $\mathcal{M}'$ manipulates the affected main memory mapping cell accordingly. The transition of $\mathcal{M}'$ changes the register to conform with the position of the head of $\mathcal{M}$.
	
	For each IO operation of $\mathcal{M}$, $\mathcal{M}'$ simulates it by two IO operations. For read operation of $\mathcal{M}$, $\mathcal{M}'$ reads from the external mapping cell and writes to the main memory mapping cell. For write operation of $\mathcal{M}$, $\mathcal{M}'$ reads from main memory mapping cell and writes to external memory mapping cell.
	
	The correctness of the above simulation can be easily verified, and the time complexity of the simulation is straightforward.
	\qed
\end{proof}

\begin{theorem}\label{thrm:tlm-sim-tm}
	If a function $f$ is computable by a $T(n)$-time Turing machine $\mathcal{M}$, then there is a $(T(n),T(n))$-time TM-TLM $\mathcal{M}'$ that can compute $f$.
\end{theorem}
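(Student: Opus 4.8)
The plan is to have $\mathcal{M}'$ directly and step-for-step simulate $\mathcal{M}$, using $\mathcal{M}'$'s external memory tape to play the role of $\mathcal{M}$'s (unbounded) work tape and using only a constant number of main-memory cells as scratch space. Since we assume $M<n$, the input $x$ of $\mathcal{M}'$ necessarily resides on the external tape; taking $\mathcal{M}$ to be an ordinary one-tape Turing machine, the initial contents of $\mathcal{M}'$'s external tape — $x$ followed by blanks — is exactly the initial tape configuration of $\mathcal{M}$. (A multi-tape $\mathcal{M}$ is treated identically by reserving several disjoint regions of the external tape, one per tape of $\mathcal{M}$; only the constants below change, not the argument, and in particular we never pay the quadratic cost of first collapsing $\mathcal{M}$ to one tape.) Throughout the simulation $\mathcal{M}'$ keeps $\mathcal{M}$'s current control state in its own finite control and keeps a representation of the position of $\mathcal{M}$'s head on its address tape.

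Next I would spell out the simulation of one step of $\mathcal{M}$. First, $\mathcal{M}'$ enters a $Read$ state using the address currently stored on the address tape, copying the symbol scanned by $\mathcal{M}$ into its single scratch cell on the main-memory tape. Second, reading that symbol together with the remembered state of $\mathcal{M}$, $\mathcal{M}'$ consults a hard-wired copy of $\mathcal{M}$'s transition table and determines the symbol $\mathcal{M}$ writes, its new state, and its head direction. Third, $\mathcal{M}'$ writes that symbol into the scratch cell and enters a $Write$ state with the same address, installing the new symbol on the external tape. Fourth, $\mathcal{M}'$ moves to $\mathcal{M}$'s new state in its finite control and updates the address tape to reflect the head move. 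When $\mathcal{M}$ halts accepting, $\mathcal{M}'$ enters $q_f$; since the external tape has mirrored $\mathcal{M}$'s tape at every step, the output left by $\mathcal{M}'$ equals the output left by $\mathcal{M}$, so $\mathcal{M}'$ computes $f$.

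For the complexity bookkeeping, each step of $\mathcal{M}$ is simulated by exactly two IO operations (one $Read$, one $Write$) plus $O(1)$ ordinary transitions, so $IO_{\mathcal{M}'}(x)=2\,T_{\mathcal{M}}(x)=O(T(n))$ and $T_{\mathcal{M}'}(x)=O(T_{\mathcal{M}}(x))=O(T(n))$, provided the fourth substep — updating the stored head position on the address tape — costs only $O(1)$ per step. I would secure this either by maintaining the head position in a representation supporting $\pm1$ updates in $O(1)$ amortized time, or simply by adopting the model's convention (implicit already in the proof of Lemma~\ref{lema:1-memory-simulation}) that placing the address on the address tape is part of the unit-cost $Read$/$Write$ operation. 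Only a fixed number of main-memory cells are ever used, so $\mathcal{M}'$ is in fact an $O(1)$-memory $(T(n),T(n))$-time TM-TLM, consistent with Lemma~\ref{lema:1-memory-simulation}.

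The main obstacle is exactly this last point: ensuring that tracking the moving head of $\mathcal{M}$ does not secretly incur a $\Theta(\log T(n))$ factor per step — a naive binary counter on the address tape can be forced to do $\Theta(\log T(n))$ work per move when increments and decrements alternate near a power of two, which would only yield a $(T(n)\log T(n),T(n))$-time machine. Everything else — correctness of the table look-up, the one-to-one correspondence between configurations of $\mathcal{M}$ and of $\mathcal{M}'$, the handling of halting and of a multi-tape $\mathcal{M}$ — is routine.
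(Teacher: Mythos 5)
Your proposal is correct and follows essentially the same route as the paper's proof: the external tape mirrors $\mathcal{M}$'s work tape, a register/address-tape entry tracks the head position, and each step of $\mathcal{M}$ is simulated by one $Read$, one $Write$, and $O(1)$ ordinary transitions, yielding $(T(n),T(n))$-time. Your extra care about the per-step cost of maintaining the head-position counter addresses a detail the paper's proof silently absorbs into its ``register,'' and your resolution (treating address placement as part of the unit-cost IO operation) matches the paper's implicit convention.
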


\begin{proof}
	A 1-memory TM-TLM $\mathcal{M}'$ suffices to simulate the Turing machine. The external memory of $\mathcal{M}'$ simulates the working tape of the Turing machine $\mathcal{M}$. $\mathcal{M}'$ also needs a register to remember the position on the working tape where the head points to. Then similar with the proof of Lemma \ref{lema:1-memory-simulation}, for each transition of $\mathcal{M}$, $\mathcal{M}'$ simulates it by two IO operations and one transition. The two IO operations changes the cell affected by the transition of $\mathcal{M}$, and the transition of $\mathcal{M}$ changes the value of the register to remember the new position of the head of $\mathcal{M}$. In such way, $\mathcal{M}'$ can correctly simulate $\mathcal{M}$ in $(T(n),T(n))$-time.
\end{proof}

\begin{theorem}\label{thrm:tm-sim-tlm}
	If a function $f$ is computable by a $M$-memory $(T(n),IO(n))$-time $S(n)$-space TM-TLM $\mathcal{M}$,  then there is a $O(T(n)+IO(n)*S(n))$-time Turing machine $\mathcal{M}'$ that can compute $f$.
\end{theorem}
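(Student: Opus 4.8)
The plan is to have $\mathcal{M}'$ keep an explicit copy of the entire configuration of $\mathcal{M}$ and replay $\mathcal{M}$'s computation step by step, spending $O(1)$ time per ordinary transition of $\mathcal{M}$ and $O(S(n))$ time per IO operation. Concretely, I would make $\mathcal{M}'$ a multi-tape Turing machine with one tape holding the contents of $\mathcal{M}$'s main memory tape, one tape holding the contents of $\mathcal{M}$'s external memory tape (at most $S(n)$ cells, written from a fixed left end), one tape holding $\mathcal{M}$'s address tape, and one scratch tape; $\mathcal{M}$'s current state lives in the finite control of $\mathcal{M}'$, and the heads on the main-memory and address copies are kept in lockstep with the corresponding heads of $\mathcal{M}$. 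A preprocessing phase first lays the length-$n$ input onto the external-tape copy in the format $\mathcal{M}$ expects --- it must go there, since $M<n$ --- at cost $O(n)\le O(S(n))$. Thereafter, each ordinary transition of $\mathcal{M}$ touches only a bounded neighborhood of its heads, so $\mathcal{M}'$ imitates it directly in $O(1)$ steps, and this contributes the $O(T(n))$ term.

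The only non-routine part is simulating a $Read$ or $Write$ of $\mathcal{M}$. At that point the target address $addr\le S(n)$ sits in binary on $\mathcal{M}'$'s copy of the address tape, and $\mathcal{M}'$ must (i) move the head of its external-tape copy to cell number $addr$, (ii) copy one symbol between that cell and the main-memory cell under $\mathcal{M}$'s head, with the direction determined by whether the operation is a $Read$ or a $Write$, and (iii) return that head to the left end so the next IO can start cleanly. For (i), I would copy $addr$ to the scratch tape and then walk the external-tape head rightward from the left end, decrementing the scratch copy by one per cell passed and stopping when it reaches zero. The rightward walk costs $O(addr)=O(S(n))$ steps; and --- this is the one place that needs care --- counting a binary number down to zero on a Turing machine takes total work \emph{linear} in its value rather than a logarithmic factor more, because bit $i$ of the counter flips only $O(addr/2^{i})$ times over the whole countdown and the borrow chains shorten geometrically, so the cumulative head travel for the decrements is $O(addr)=O(S(n))$. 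Steps (ii) and (iii) are plainly $O(S(n))$ too. Hence a single IO operation is simulated in $O(S(n))$ steps and all $IO(n)$ of them in $O(IO(n)\cdot S(n))$ steps.

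Summing the two contributions gives the claimed $O(T(n)+IO(n)\cdot S(n))$ running time of the multi-tape machine $\mathcal{M}'$; here, as usual, a Turing machine is allowed a fixed number of tapes. Correctness is a routine induction on the number of simulated steps: one checks that whenever $\mathcal{M}'$ finishes simulating a step of $\mathcal{M}$, its tapes and finite control encode exactly the configuration $\mathcal{M}$ is then in, so $\mathcal{M}'$ halts with the same output as $\mathcal{M}$ on every input. The main obstacle is thus not conceptual but the amortized accounting in step (i): keeping the address-to-position lookup at $O(S(n))$, not $O(S(n)\log S(n))$, is exactly what makes the stated bound come out right, and it is the absence of any persistent external-tape head in $\mathcal{M}$ --- forcing every IO to be served by a fresh scan of the external-tape copy --- that is responsible for the $S(n)$ factor multiplying $IO(n)$.
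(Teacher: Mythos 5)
Your proposal is correct and follows essentially the same route as the paper's proof: simulate each ordinary transition of $\mathcal{M}$ directly in $O(1)$ steps on a copy of the main-memory tape, and serve each IO operation by walking the head of the external-tape copy to the target cell in $O(S(n))$ steps, giving $O(T(n)+IO(n)\cdot S(n))$ overall. The only difference is that you spell out how the head locates cell $addr$ (the countdown counter and its amortized-linear decrement cost), a detail the paper's three-tape simulation leaves implicit.
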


\begin{proof}
	We use a 3-tape Turing machine $\mathcal{M}'$ to simulate TM-TLM $\mathcal{M}$. The three tapes are denoted as tape-1, tape-2 and tape-3, respectively. Tape-1 of $\mathcal{M}'$ simulates the main memory tape of $\mathcal{M}$, and tape-2 and tape-3 of $\mathcal{M}'$ simulates the address tape and external tape of $\mathcal{M}$. For the main memory computations of TM-TLM $\mathcal{M}$, the Turing machine $\mathcal{M}'$ simulates them by doing exactly the same transitions on tape-1. For each IO operation of $\mathcal{M}$, the Turing machine $\mathcal{M}'$ simulates it as follows. First $\mathcal{M}'$ writes the address to tape-2 using $O(\log{S(n)})$ time, then the head on tape-3 moves to the designated cell using  $O(S(n))$ time. Therefore, the each IO operation of  $\mathcal{M}$ is simulated by $\mathcal{M}'$ in $O(\log{S(n)})+O(S(n))=O(S(n))$ time. In conclusion, the total time of simulation is $O(T(n)+IO(n)*S(n))$.
\end{proof}

\subsection{TM-TLM v.s. Block Transfer TM-TLM}
The realistic external memory such as magnetic disk has the ability of block transfer, i.e., a single read or write operation can transfer $B$ records simultaneously between main memory and external memory, where $B$ is the block size. 
The TM-TLM is modeled as single cell transfer which makes it easier to design algorithms on it. However, it must be proved that it does not reduce the computational power of TM-TLM by assuming single cell transfer. Here we prove that the block transfer and single cell transfer are equivalent under big-O notation.

Denote TM-TLM-BT as Turing machine with two level memory and block transfer ability. For ease of discussion, we assume that the size of the main memory $M$ can be divided by the block size $B$. The data in the main and external memory are aligned to the block boundary, and any IO operation will only affect a single block. Moreover, even if the IO operation asks for transferring a single cell, all the $B$ cells in the block will be transferred simultaneously.

\begin{theorem}\label{thrm:block-sim}
	If a function $f$ is computable by a $M$-memory $(T(n),IO(n))$-time TM-TLM-BT, then there is a $M$-memory $(T(n)+B\cdot IO(n),B\cdot IO(n))$-time TM-TLM that can compute $f$.
\end{theorem}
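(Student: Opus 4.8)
The plan is to have the single-cell machine $\mathcal{M}'$ run the block-transfer machine $\mathcal{M}$ step for step, keeping its own size-$M$ main memory tape, address tape and external tape in exact correspondence with those of $\mathcal{M}$, and to replace each block transfer of $\mathcal{M}$ by a short subroutine that performs $B$ single-cell transfers. Concretely, I would fix the convention that block $a$ of $\mathcal{M}$'s external memory is the run of $B$ consecutive cells of $\mathcal{M}'$'s external tape with addresses $aB, aB+1, \dots, aB+B-1$, and likewise that a main-memory block is a run of $B$ consecutive cells aligned at a multiple of $B$. Since $B$ is a fixed constant, $\mathcal{M}'$ can keep in its finite control the offset $j \in \{0,\dots,B-1\}$ of the main-memory head inside its current block, so it always knows where the current block begins and ends.

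For a main-memory (non-IO) transition of $\mathcal{M}$, $\mathcal{M}'$ simply performs the identical transition; this costs one transition and no IO and maintains the correspondence of all three tapes together with the stored offset $j$. For a block-\emph{Read} (resp.\ block-\emph{Write}) state of $\mathcal{M}$ with block address $a$ on the address tape, $\mathcal{M}'$ executes the following: move the main-memory head to the start of the current block (at most $B-1$ steps); then for $i=0,1,\dots,B-1$, set the address tape to encode cell $aB+i$ and issue a single-cell \emph{Read} (resp.\ \emph{Write}) on the cell under the head, then move the head one cell to the right; finally move the head to the cell that $\mathcal{M}$'s head would occupy after its block IO and the accompanying $L/S/R$ move, updating the stored offset. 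This needs only $O(B)$ transitions, exactly $B$ single-cell IO operations, and $O(B)$ extra control states, a constant since $B$ is fixed. I would choose the address encoding so that passing from $a$ to $aB+i$ and cycling $i$ through $0,\dots,B-1$ costs only $O(1)$ address-tape work per value (for instance, taking $B$ a power of two and reserving the low-order $\log B$ positions of the address tape for the in-block offset), so that the per-block overhead really is $O(B)$ rather than $O(B\log S(n))$.

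Correctness follows by induction on the number of steps of $\mathcal{M}$: after $\mathcal{M}'$ finishes simulating one step of $\mathcal{M}$, the contents of the main-memory mapping cells, the external and address tapes, the head positions, and the simulated state all agree with $\mathcal{M}$, because $B$ single-cell transfers performed in the prescribed order have exactly the same net effect on both tapes as one block transfer, including the degenerate case where $\mathcal{M}$'s IO nominally touches one cell but the model transfers the whole block. For the complexity bound, the $T(n)$ main-memory transitions of $\mathcal{M}$ become $T(n)$ transitions of $\mathcal{M}'$ with no IO, while each of the $IO(n)$ block transfers becomes $O(B)$ transitions and $B$ single-cell transfers; summing gives time $T(n)+O(B\cdot IO(n))=O(T(n)+B\cdot IO(n))$ and IO count $B\cdot IO(n)$, as claimed. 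The only point that genuinely needs care, and the closest thing to an obstacle in an otherwise routine argument, is this final accounting: ensuring the per-block overhead is truly $O(B)$, which is exactly why the block-aligned layout and the constant-length mutable suffix of the address tape are introduced.
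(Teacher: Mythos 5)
Your proposal is correct and follows essentially the same route as the paper's own proof: each block transfer of the TM-TLM-BT is replaced by $B$ consecutive single-cell IO operations plus $O(B)$ head moves to restore the head position, giving $B\cdot IO(n)$ IO operations and $T(n)+O(B\cdot IO(n))$ transitions. Your additional care about block-aligned addressing and the cost of updating the address tape only fills in details the paper leaves implicit.
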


\begin{proof}
	Given the $M$-memory TM-TLM-BT $\mathcal{M}$, the $M$-memory TM-TLM $\mathcal{M}'$ simulates $\mathcal{M}$ as follows. $\mathcal{M}'$ executes the same move as $\mathcal{M}$ for every transition $\delta$ that is not IO operation. And for each IO operation of $\mathcal{M}$, $\mathcal{M}'$ simulates it by $B$ consecutive IO operations which transfer exactly the same $B$ cells between main and external memory. After transferring the data, the head of $\mathcal{M}'$ is moved to the same position as $\mathcal{M}$, using at most $B$ moves. In conclusion, for each IO operation of $\mathcal{M}$, $\mathcal{M}'$ uses $B$ IO operations and at most $B$ transitions to simulate. Thus the result in the lemma follows.	
	\qed
\end{proof}

\subsection{IO complexity v.s. Parameterized Complexity}\label{subsec:io-parameterized}
Since TM-TLM has a limited main memory of size $M$, there is an intuition that $M$ can be regard as a parameter, and a connection between IO complexity an parameterized complexity can be established.

\begin{lemma}\label{lema:tlm-compute-kclique}
	The $k$-clique problem can be solved by a $(k+1)$-memory TM-TLM in $(k^2n^k,k^2n^k)$-time.
\end{lemma}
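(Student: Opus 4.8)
The plan is to implement on a $(k+1)$-cell TM-TLM the textbook brute-force search for a $k$-clique: enumerate every candidate set of $k$ vertices and, for each, test the $\binom{k}{2}$ potential edges by consulting the adjacency matrix, which lives on the external tape. First I would fix the external-tape layout so that the input graph $G$ on $n$ vertices is stored as its adjacency matrix $A$ in row-major order, so that entry $A[u][v]$ occupies an address that is a fixed arithmetic function of $u$, $v$, and $n$ (if $G$ is presented in another encoding, one scan of $O(n^2)$ IOs rewrites it into this form, which is absorbed into $k^2 n^k$ since $k\ge 2$). Main memory is then used as follows: $k$ cells hold the vertices $v_1<\dots<v_k$ of the current candidate set, one extra cell is scratch for the bit returned by a $Read$, and the two pair-indices $1\le i<j\le k$ of the inner loop sit in the finite control, which is legitimate because $k$ is the parameter and the machine may depend on it.

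The machine then runs two nested loops. The outer loop advances $\{v_1,\dots,v_k\}$ through all $\binom{n}{k}$ strictly increasing $k$-tuples (at most $n^k$ of them), treated as an odometer: locate the rightmost index that may still be increased, increase it, and reset the indices to its right to the next consecutive values, costing $O(k)$ cell operations per step. For the current tuple the inner loop runs over all $\binom{k}{2}$ pairs $(i,j)$; for each it places the address of $A[v_i][v_j]$ on the address tape, positions the main-memory head on the scratch cell, issues one $Read$, and inspects the returned bit. If some bit is $0$ the tuple is discarded and the outer loop continues; if all $\binom{k}{2}$ bits are $1$ the machine enters $q_f$ and accepts; if the outer loop is exhausted without acceptance it halts and rejects. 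Per tuple this is $O(k^2)$ $Read$s and $O(k^2)$ transitions for the pair loop together with $O(k)$ transitions to advance the odometer, i.e.\ $O(k^2)$ time and $O(k^2)$ IO; multiplied by the at most $n^k$ tuples this yields the claimed $(O(k^2 n^k),O(k^2 n^k))$-time bound, and at no moment are more than $k+1$ main-memory cells occupied. Correctness is immediate: the machine accepts iff some $k$-subset induces all $\binom{k}{2}$ edges, i.e.\ iff $G$ has a $k$-clique.

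The step I expect to need the most care is realizing the address of $A[v_i][v_j]$ — essentially the product $v_i\cdot n$ plus $v_j$ — inside the $k+1$-cell budget and without inflating the running time, since a general multiplication of $\Theta(\log n)$-bit numbers is not an $O(1)$ operation. I would sidestep this by never recomputing an address from scratch: keep the current entry's address on the address tape (which is not counted against $M$) and update it incrementally — moving to the next pair $(i,j)$ is a small increment, moving to a new row of $A$ adds the constant $n$, and rewinding after a discarded tuple is a bounded number of such moves — so all the arithmetic is carried by the address tape while the only main-memory cost stays the single scratch cell. Under the paper's record/word reading of a tape cell each such update is $O(1)$ and the stated bound holds verbatim; if one instead charges $\Theta(\log n)$ per update, the bound degrades only to $O(k^2 n^k\log n)$. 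A minor related point, easily checked, is that the odometer and the indices $i,j$ genuinely fit in the finite control plus the $k+1$ cells once $k$ is treated as a parameter, so no hidden memory is used.
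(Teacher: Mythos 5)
Your proposal is correct and follows essentially the same route as the paper's proof: enumerate all $k$-subsets in $k$ main-memory cells, use the one remaining cell to read each of the $O(k^2)$ relevant adjacency entries from the external tape, for $O(k^2 n^k)$ time and IO. The only difference is that you spell out the address arithmetic and external-tape layout explicitly, where the paper simply invokes tape compression and assumes a node or edge id fits in one cell.
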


\begin{proof}
	The TM-TLM simulates the basic $O(k^2n^k)$-time algorithm for $k$-clique as follows.  The input graph is stored on the external memory. For the main memory, the leading $k$ cells in the main memory is used to enumerate the id's of $k$ nodes, and the last cell is used to read the edges corresponding to the enumerated nodes and check if they form a clique. Note that here we implicitly use the tape-compression technique and assumes that the id of a node or an edge can be stored in one single cell. The number of IO operations is $O(k^2n^k)$ since the algorithm enumerates all possible $k$-set of nodes, and for each $k$-set of the nodes the algorithm reads $k^2$ edges. As to the main memory computation, for each possible $k$-set of nodes, it needs $O(k)$ time to list them on the main memory tape, and $O(k^2)$ time to generate the end points of the edges to read. Then the main memory computation time complexity is $O(k^2n^k)$.
	\qed
\end{proof}

\begin{definition}[Parameterized Reduction, see \cite{downey2013fundamentals}]
	Let $A,B\subseteq \sigma^*\times N$ be two parameterized problems. A parameterized reduction from $A$ to $B$ is an algorithm that, given an instance $(x,k)$ of $A$, outputs an instance $(x',k')$ of B such that: (1) $(x,k)$ is a Yes-intance of $A$ if and only if $(x',k')$ is a Yes-instance of $B$, (2) $k'\le g(k)$ for some computable function $g$, and (3) the running time of the reduction is bounded by $f(k)\cdot poly(|x|)$ for some computable function $f$.
\end{definition}

\begin{lemma}\label{lema:tlm-compute-para-reduction}
	Any parameterized reduction with input instance $(x,k)$ and computable functions $f,g$ can be computed by a $poly(g(k))$-memory $(poly(f(k))\cdot poly(|x|),poly(f(k))\cdot poly(|x|))$-time TM-TLM.
\end{lemma}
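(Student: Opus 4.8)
The plan is to treat the parameterized reduction as nothing more than an algorithm and to transport it onto a TM-TLM by the simulation already proved in Theorem~\ref{thrm:tlm-sim-tm}. First I would fix the standard formalization: a parameterized reduction is a deterministic algorithm $R$ that, on input $(x,k)$, halts within $f(k)\cdot poly(|x|)$ steps and leaves on its output tape a pair $(x',k')$ with $k'\le g(k)$ (and hence $|x'|\le f(k)\cdot poly(|x|)$). Thus the map $(x,k)\mapsto(x',k')$ is a function computable by a $T(n)$-time Turing machine with $T(n)=f(k)\cdot poly(|x|)$, a bound that is in particular of the form $poly(f(k))\cdot poly(|x|)$; if $R$ is originally described on a RAM or on a multi-tape machine, the usual model conversion to the Turing machine of Theorem~\ref{thrm:tlm-sim-tm} costs only a further polynomial factor, which is again absorbed into $poly(f(k))\cdot poly(|x|)$.

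Next I would invoke Theorem~\ref{thrm:tlm-sim-tm} directly: since the reduction function is computable by a $T(n)$-time Turing machine, there is a TM-TLM $\mathcal{M}'$ computing it whose time complexity and IO complexity are both $O(T(n))=poly(f(k))\cdot poly(|x|)$. Moreover, the construction behind Theorem~\ref{thrm:tlm-sim-tm} (and Lemma~\ref{lema:1-memory-simulation}) only ever uses one main-memory cell --- the working tape of $R$ lives on the external tape and the head position is kept on the address tape --- so $\mathcal{M}'$ may be taken to be a $1$-memory TM-TLM. Since a $1$-memory TM-TLM is trivially also an $M$-memory TM-TLM for every $M\ge 1$ (the surplus cells are never touched), $\mathcal{M}'$ is in particular a $poly(g(k))$-memory TM-TLM, assuming as we may that $g(k)\ge 1$. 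Correctness of $R$ guarantees that the pair left on the external tape satisfies $k'\le g(k)$, which is exactly what it means for $\mathcal{M}'$ to compute the reduction, and the lemma follows. I would remark that the slack in the memory bound is intentional: when this lemma is afterwards combined with an algorithm for the target problem --- for instance the $(k+1)$-memory $k$-clique machine of Lemma~\ref{lema:tlm-compute-kclique} run on the reduced parameter $k'\le g(k)$ --- one wants both stages to share a common main-memory size $poly(g(k))$ so that the composed machine is again a TM-TLM of that size.

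I do not expect a genuine obstacle; what needs care is purely bookkeeping. I would make the input/output convention explicit (the machine receives $(x,k)$ in whatever encoding the reduction expects and leaves $(x',k')$ on the external tape), and note that the $O(|x'|)$ steps $R$ spends writing its output are already counted inside $T(n)$, hence inside the claimed IO bound. I would also check once that ``$poly(f(k))$'' really swallows every overhead that appears --- reading the parameter, the model-conversion slowdown, and the factor-$2$ blow-up in IO coming from Theorem~\ref{thrm:tlm-sim-tm} --- each of which is polynomial in $f(k)$ and $|x|$, so the stated $(poly(f(k))\cdot poly(|x|),poly(f(k))\cdot poly(|x|))$-time bound is preserved. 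In fact the argument delivers slightly more: the reduction is computable by an $O(1)$-memory TM-TLM within those same time and IO bounds, and the $poly(g(k))$-memory form is obtained merely by padding the main memory.
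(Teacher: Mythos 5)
Your proposal is correct and follows essentially the same route as the paper: observe that the reduction is computable by a Turing machine in $f(k)\cdot poly(|x|)$ time, apply Theorem~\ref{thrm:tlm-sim-tm} to obtain a $(poly(f(k))\cdot poly(|x|),poly(f(k))\cdot poly(|x|))$-time TM-TLM, and then account for the $poly(g(k))$ main-memory size (the paper sets it to that size ``to support computing $g(k)$ in main memory,'' whereas you treat it as padding of a $1$-memory machine --- a cosmetic difference). Your version is, if anything, more careful about the bookkeeping than the paper's.
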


\begin{proof}
	By the polynomial time equivalence of Turing machine and other computation models, the parameterized reduction can be computed by a classical Turing machine in $poly(f(k)\cdot poly(|x|))=poly(f(k))\cdot poly(|x|)$ time. Then combining with Theorem \ref{thrm:tlm-sim-tm} which uses TM-TLM to simulate Turing machine, it is proved that the parameterized reduction can be computed by a TM-TLM in $(poly(f(k))\cdot poly(|x|),poly(f(k))\cdot poly(|x|))$-time. The main memory size of the TM-TLM is set to $poly(g(k))$ to support computing $g(k)$ in main memory.
	\qed
\end{proof}

\begin{theorem}\label{thrm:fpt-io-simulate}
	For a problem $\mathcal{P}$ that is FPT with parameter $k$, i.e., there is a $f(k)poly(n)$-time Turing machine that solves $\mathcal{P}$, then there exists a $poly(g(k))$-memory $(g(k)^2n^{g(k)}, g(k)^2n^{g(k)})$-time TM-TLM that can solve $\mathcal{P}$ , where $g$ is a sufficiently large computable function.
\end{theorem}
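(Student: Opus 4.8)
The plan is to route $\mathcal{P}$ through the $k$-clique problem, using $\mathrm{FPT}\subseteq\mathrm{W}[1]$ together with the $\mathrm{W}[1]$-completeness of $k$-clique, and then to glue together the reduction machine of Lemma~\ref{lema:tlm-compute-para-reduction} and the clique-solving machine of Lemma~\ref{lema:tlm-compute-kclique} on a common external tape. (A shorter route also exists: an $f(k)\,poly(n)$-time FPT algorithm can be simulated directly via Theorem~\ref{thrm:tlm-sim-tm}, giving an even stronger $(f(k)\,poly(n),f(k)\,poly(n))$-time TM-TLM; but the $\mathrm{W}[1]$ route is what makes the $poly(g(k))$-memory / $n^{g(k)}$ shape of the stated bound natural, so I take that as the intended argument.)

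First, since $\mathcal{P}$ is FPT it lies in $\mathrm{W}[1]$, and since $k$-clique is $\mathrm{W}[1]$-complete there is a parameterized reduction from $\mathcal{P}$ to $k$-clique. Fix the computable functions $f_1,g_1$ it supplies, so that on an instance $(x,k)$ with $|x|=n$ the reduction runs in time $f_1(k)\cdot poly(n)$ and outputs a $k$-clique instance $(x',k')$ with $k'\le g_1(k)$; since the output length cannot exceed the running time, $|x'|\le f_1(k)\cdot poly(n)$.

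Now build the TM-TLM in two composed stages. Stage one computes the reduction: by Lemma~\ref{lema:tlm-compute-para-reduction} this is a $poly(g_1(k))$-memory TM-TLM running in $(poly(f_1(k))\cdot poly(n),\,poly(f_1(k))\cdot poly(n))$-time and leaving $x'$ on the external tape. Stage two runs the $k$-clique algorithm of Lemma~\ref{lema:tlm-compute-kclique} on $(x',k')$: a $(k'+1)$-memory TM-TLM running in $((k')^{2}|x'|^{k'},\,(k')^{2}|x'|^{k'})$-time. Composing them, and resetting the used part of the main memory between stages, yields a single TM-TLM whose main memory has size $\max\{poly(g_1(k)),\,k'+1\}=poly(g_1(k))$ and whose time and IO are each bounded by
\[
poly(f_1(k))\cdot poly(n)+(k')^{2}|x'|^{k'}\;\le\;poly(f_1(k))\cdot poly(n)+g_1(k)^{2}\bigl(f_1(k)\,poly(n)\bigr)^{g_1(k)},
\]
which has the form $h(k)\cdot n^{d(k)}$ for computable functions $h,d$ with $d(k)=O(g_1(k))$. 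Finally choose $g$ to be any computable function with $g(k)\ge g_1(k)$, $g(k)\ge d(k)$ and $g(k)^{2}\ge h(k)$ for all $k$ (for instance $g(k):=g_1(k)+d(k)+h(k)$); then $poly(g_1(k))\le poly(g(k))$ and $h(k)\,n^{d(k)}\le g(k)^{2}n^{g(k)}$, so the constructed machine is a $poly(g(k))$-memory $(g(k)^{2}n^{g(k)},g(k)^{2}n^{g(k)})$-time TM-TLM solving $\mathcal{P}$, the finitely many small $n$ being absorbed by the ``for almost all $n$'' clause in the complexity definitions.

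The main difficulty I foresee is bookkeeping rather than anything conceptual: once the size blow-up $|x'|\le f_1(k)\cdot poly(n)$ of the reduction is raised to the power $k'\le g_1(k)$ inside Lemma~\ref{lema:tlm-compute-kclique}, one has to confirm it still collapses to $h(k)\cdot n^{d(k)}$ with the exponent $d(k)$ dominated by the chosen $g$, and one has to check that the two-stage device is a legitimate TM-TLM --- input residing on the external tape, the reduction's output left there for stage two to consume, main memory large enough for both stages. A secondary point to nail down is that invoking $\mathrm{FPT}\subseteq\mathrm{W}[1]$ and the $\mathrm{W}[1]$-completeness of $k$-clique really does hand us a map satisfying all three clauses of the Parameterized Reduction definition.
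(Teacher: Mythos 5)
Your proposal follows essentially the same route as the paper's own proof: reduce $\mathcal{P}$ to $k$-clique via $\mathrm{FPT}\subseteq\mathrm{W}[1]$ and Lemma~\ref{lema:tlm-compute-para-reduction}, then solve the resulting instance with the machine of Lemma~\ref{lema:tlm-compute-kclique}. Your bookkeeping of the instance-size blow-up $|x'|\le f_1(k)\cdot poly(n)$ and the explicit choice of $g$ absorbing $h(k)\,n^{d(k)}$ into $g(k)^2 n^{g(k)}$ is in fact more careful than the paper's, which implicitly treats the reduced instance as having size $n$.
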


\begin{proof}
	Since the $k$-clique problem is W[1]-hard and the class of FPT$\subseteq$W[1], then for every problem $\mathcal{P}$ that is FPT, there exists a parameterized reduction from $\mathcal{P}$ to $k$-clique. Then we have the following process to solve any problem $\mathcal{P}$ that is FPT. First use the parameterized reduction to reduce $\mathcal{P}$ to $k$-clique, using a $poly(g(k))$-memory $(poly(f(k))poly(n),poly(f(k))poly(n))$-time TM-TLM, as described in Lemma \ref{lema:tlm-compute-para-reduction}. The obtained instance of $k$-clique has a parameter of $g(k)$ where $g$ is a sufficient large computable function. By Lemma \ref{lema:tlm-compute-kclique}, this instance of $k$-clique can be solved using a $(g(k)+1)$-memory TM-TLM in $(g(k)^2n^{g(k)}, g(k)^2n^{g(k)})$-time. In conclusion, the whole above process can be done with a $poly(g(k))$-memory $(g(k)^2n^{g(k)}, g(k)^2n^{g(k)})$-time TM-TLM.
	\qed
\end{proof}

\section{The Random Access Turing Machine with two-level memory}\label{sec:ratm-tlm}
Since the TM-TLM can not be used to analyze sub-linear time complexity, here we propose the Random Access Turing Machine with two-level memory (RATM-TLM), which grants TM-TLM with the ability of random access on the main memory. It is also an extension of the RATM given in \cite{gao2020recognizing}.

\begin{definition}
	A RATM-TLM is a system
	$\{Q,\Sigma, \Gamma, M, \delta, Read,Write,B, q_r,\\q_w, q_a,q_f\}$ where\\
	$Q$ is the finite set of states,\\
	$\Sigma$ is the finite set of input symbols,\\
	$\Gamma \supseteq \Sigma$ is the finite set of tape symbols,\\
	$M$ is the size of the finite main memory tape,\\
	$\delta $ is the transition function: $Q\times \Gamma \rightarrow Q\times \Gamma \times\{L,S,R\}$,\\
	$Read\subset \delta$ is the read operations,\\
	$Write\subset \delta$ is the write operations,\\
	$B\in \Gamma\setminus \Sigma$ is the blank symbol,\\
	$q_r\subset Q$ is the read states,\\
	$q_w\subset Q$ is the write states,\\
	$q_a\in Q$ is the random access state, and\\
	$q_f\in Q$ is the accepting state.
\end{definition}

According to the idea that RATM-TLM is a combination of random access on the main memory tape and IO operations on the external tape, we have the following corollaries from the results in Section \ref{sec:tm-tlm} and in \cite{gao2020recognizing}.

\begin{corollary}\label{coro:RATM-TLM-relation}
	If $f$ is computable by a RATM in $T(n)$ time , then it is computable by a $(T(n),T(n))$-time RATM-TLM.
	
	If $f$ is computable by a $(T(n),IO(n))$-time RATM-TLM, then it is computable by a RATM in $O(T(n)+IO(n))$ time.
\end{corollary}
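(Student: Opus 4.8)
The plan is to observe that this corollary is the random-access counterpart of Theorems \ref{thrm:tlm-sim-tm} and \ref{thrm:tm-sim-tlm}, and to obtain it by rerunning those two simulations with physical head movement replaced by random access. In both directions the construction is essentially the one already given for TM-TLM versus TM; the only new ingredient is that a random-access jump on one machine is mirrored by a random-access jump on the other, rather than by walking a head.

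For the first direction, I would build a constant-memory RATM-TLM $\mathcal{M}'$ simulating the RATM $\mathcal{M}$ exactly as in the proof of Theorem \ref{thrm:tlm-sim-tm}: the external tape of $\mathcal{M}'$ holds the contents of the working tape of $\mathcal{M}$, and $\mathcal{M}'$ keeps in its constant-size main memory a register recording the current head position of $\mathcal{M}$. Each step of $\mathcal{M}$ is simulated by one $Read$ (fetch the symbol at the recorded position), the corresponding local transition (if $\mathcal{M}$ performs a random-access jump, $\mathcal{M}'$ mirrors it by overwriting the register with the new address — this is where we use that $\mathcal{M}'$ is a RATM-TLM and not merely a TM-TLM), and one $Write$ (store the possibly updated symbol back). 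This costs $O(1)$ main-memory transitions and $2$ IO operations per step of $\mathcal{M}$, hence a $(T(n),T(n))$-time RATM-TLM.

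For the second direction, I would simulate a $(T(n),IO(n))$-time RATM-TLM $\mathcal{M}$ by a RATM $\mathcal{M}'$ whose tapes correspond to the main-memory, address, and external tapes of $\mathcal{M}$, following the proof of Theorem \ref{thrm:tm-sim-tlm}. Every non-IO transition, including random accesses on the main-memory tape, is copied verbatim at cost $O(1)$. The crucial change is the treatment of an IO operation: whereas in Theorem \ref{thrm:tm-sim-tlm} the simulating Turing machine had to advance its external-tape head $O(S(n))$ cells to reach the target address, here $\mathcal{M}'$ uses random access to jump directly to that cell and copy it in $O(1)$ time. Thus each IO operation of $\mathcal{M}$ is simulated in $O(1)$ time, and the total running time of $\mathcal{M}'$ is $O(T(n)+IO(n))$, with no $S(n)$ factor.

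The main point requiring care — really the only place any work is needed — is matching the conventions of the RATM model of \cite{gao2020recognizing} to those of Section \ref{sec:tm-tlm}: we must check that the RATM provides enough tapes (or a single tape that can be block-partitioned with a splitter symbol as in Lemma \ref{lema:1-memory-simulation}) together with unit-cost random access on each, so that the address already produced by $\mathcal{M}$ can be used to index the external-tape simulation without more than $O(1)$ overhead. Once these conventions are aligned, correctness of both simulations is immediate from the correctness arguments already given for Theorems \ref{thrm:tlm-sim-tm} and \ref{thrm:tm-sim-tlm}.
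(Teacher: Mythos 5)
Your proposal is correct and follows essentially the same route the paper intends: the paper states this corollary without a separate proof, asserting it follows by combining the TM-TLM simulations of Theorems \ref{thrm:tlm-sim-tm} and \ref{thrm:tm-sim-tlm} with the random-access capability of the RATM of \cite{gao2020recognizing}, which is precisely the combination you spell out (in particular, your observation that random access eliminates the $S(n)$ head-walking factor in the second direction is the intended point). Your added caveat about aligning the tape and unit-cost random-access conventions of the two models is a reasonable refinement of the same argument, not a departure from it.
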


\begin{corollary}
	If $f$ is computable by a  $(T(n),IO(n)$-time RATM-TLM, then it is computable by a DTM in $O((T(n)+IO(n))^2\log{(T(n)+IO(n))})$ time.
\end{corollary}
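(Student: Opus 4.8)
The plan is to obtain this by composing two simulations, one of which is already in the excerpt. First I would apply the second statement of Corollary~\ref{coro:RATM-TLM-relation}: a $(T(n),IO(n))$-time RATM-TLM computing $f$ is simulated by a plain RATM computing $f$ in time $t(n):=O(T(n)+IO(n))$. Thus it suffices to prove the purely classical fact that every $t$-time RATM is simulated by a deterministic multitape Turing machine in time $O(t^2\log t)$, and then substitute $t=O(T(n)+IO(n))$, absorbing constants into the $O(\cdot)$ and noting that $\log(O(T+IO))=O(\log(T+IO))$.

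For the second step I would invoke the RATM-to-DTM simulation from \cite{gao2020recognizing}; a self-contained version is the textbook simulation of a random-access device by a Turing machine. The DTM maintains on one work tape a dictionary of every memory cell the RATM has written, stored as a list of (address, symbol) pairs kept sorted by address. Over a computation of length $t$ at most $t$ distinct addresses are ever named, so the list has $O(t)$ entries; an ordinary RATM head move is simulated directly, while each random-access step triggers one lookup/insert, costing $O(t)$ address comparisons with each comparison taking time proportional to the address length. If the addresses handled are $O(\log t)$ bits long, this is $O(t\log t)$ per random-access step and $O(t^2\log t)$ in total, which is exactly the target bound after the substitution $t=O(T(n)+IO(n))$.

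The main obstacle will be the bookkeeping of addresses in that RATM-to-DTM step: a naive simulation that handles addresses at face length can pay an extra polynomial factor rather than just the $\log t$ factor, since in $t$ steps an address tape can in principle hold $\Theta(t)$ symbols. One therefore has to argue either (i) that in the RATM model of \cite{gao2020recognizing} the addresses actually reachable within $t$ steps have length $O(\log t)$, or (ii) that a preliminary scan collects the at-most-$t$ distinct addresses and replaces each by an $O(\log t)$-bit handle, and that this preprocessing is itself subsumed by the $O(t^2\log t)$ budget. Everything else — threading the argument through Corollary~\ref{coro:RATM-TLM-relation}, and the arithmetic of composing $O(t^2\log t)$ with $t=O(T(n)+IO(n))$ — is routine.
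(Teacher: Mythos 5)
Your proposal follows exactly the route the paper intends: the paper gives no explicit proof, but states that the corollary follows by combining the second half of Corollary~\ref{coro:RATM-TLM-relation} (RATM-TLM to RATM in $O(T(n)+IO(n))$ time) with the $O(t^2\log t)$ RATM-to-DTM simulation from \cite{gao2020recognizing}, which is precisely your decomposition. Your additional care about address lengths in the dictionary-based simulation is a sensible elaboration of the cited result, not a departure from the paper's argument.
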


\section{The Random Access Turing Machine with Blocking-IO}\label{sec:RATM-BIO}
\subsection{The definition of Random Access Turing Machine with Blocking-IO}
The TM-TLM and RATM-TLM consider the IO operations as oracles, which assumes that the cost of performing IO operations and main memory computations are equal. However, in realistic computers the IO operations usually take more time than CPU computations. Therefore, in this section the cost of IO operations are modeled explicitly, which provides a theoretical point of view on IO-efficient algorithms.

The Random Access Turing Machine with Blocking-IO (RATM-BIO) is a Random Access Turing machine that can switch between computation states and IO states.
There are two heads in the RATM-BIO, one for main memory computation and the other for external memory data access. They are called the main head and the external head respectively. There are four tapes in RATM-BIO, which are the main memory tape and external memory tape, and the address tapes for the two memory tapes. The main memory head is granted the ability of random access, while the external head must moves consecutively on the external tape like a classical Turing machine.

The transitions of RATM-BIO are grouped into three sets, which are the main memory computation transitions, the external memory access transitions, and read/write transitions. For the main memory computation transitions, the main head moves on the main memory tape and the external head halts. For the external memory access transitions, the main head halts and the external head moves on the external tape. This behavior reflects the blocking IO.

The main memory computation transitions and the external memory access transitions alternate via read/write operations in the following way. When entering the read(write) state, the main head will write the address of the cell in the external tape to the address tape, then the main head halts and the external head starts to move on the external tape to the designated cell. In such way, the RATM-BIO leaves the main memory computation states and enters the external memory access states. When the designated cell is reached, the content in it will replace (be replaced by) the content in the cell where the main head points to, then the external head halts and the main head starts to move again, i.e., the RATM-BIO leaves the external memory access states and enters main  memory computation states. The formal definition of RATM-BIO is given below.

\begin{figure}[t]
	\centering
	\includegraphics[width=0.8\linewidth]{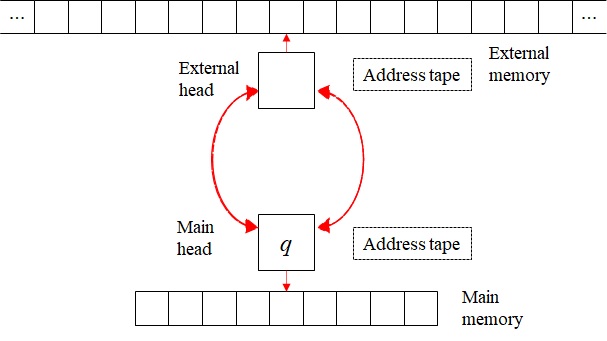}
	\caption{RATM-BIO}\label{fig:ratm-bio}
\end{figure}

\begin{definition}\label{def:ratm-bio}
	The Random Access Turing Machine with Blocking-IO (RATM-BIO) is a system $\{Q,Q_m,Q_e,\Sigma, \Gamma, M, \delta_m,\delta_e, Read,Write, B, q_f,q_r,q_w,q_a\}$ where\\
	$Q$ is the finite set of states,\\
	$Q_m\subset Q$ is the finite set of main memory computation states,\\
	$Q_e\subset Q$ is the finite set of external memory access states,\\
	$\Sigma$ is the finite set of input symbols,\\
	$\Gamma \supseteq \Sigma$ is the finite set of tape symbols,\\
	$M$ is the size of the finite main memory tape,\\
	$\delta$ is the transition function: $\delta \subseteq Q\times \Gamma \rightarrow Q\times \Gamma \times\{L,S,R\}$,\\
	$\delta_m\subset \delta$ is the main memory computation transitions,\\
	$\delta_e\subset \delta$ is the external memory access transitions,\\
	$Read \subset \delta$ is the read operations,\\
	$Write \subset \delta$ is the write operations,\\
	$B\in \Gamma\setminus \Sigma$ is the blank symbol,\\
	$q_r\subset Q$ is the read states,\\
	$q_w\subset Q$ is the write states,\\	
	$q_a\in Q$ is the random access state, and\\
	$q_f\in Q$ is the accepting state.\\
\end{definition}

\subsection{The external access trace complexity}
\begin{definition}[external access trace complexity]
	The \textit{external access trace complexity} of the RATM-BIO is defined to be the total number of moves of the external head. 
\end{definition}

\begin{theorem}\label{thrm:trace-lb}
	For a $M$-memory $(T(n),IO(n))$-time RATM-TLM $\mathcal{M}$, a $M$-memory RATM-BIO $\mathcal{M}'$ can simulate $\mathcal{M}$ with $\Omega(IO(n))$ external access trace complexity.
\end{theorem}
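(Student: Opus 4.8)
The plan is to build the \emph{canonical} RATM-BIO simulation of $\mathcal{M}$ and then read off a lower bound on the number of moves made by its external head. First I would lay out $\mathcal{M}'$'s external tape as a copy of $\mathcal{M}$'s external tape shifted one cell to the right: cell $0$ of $\mathcal{M}'$'s external tape is reserved as a fixed \emph{home} cell that is never the target of an IO operation, and the content of $\mathcal{M}$'s external cell $a$ is stored in $\mathcal{M}'$'s external cell $a+1$. Every transition of $\mathcal{M}$ that is not in $Read\cup Write$ is executed verbatim by $\mathcal{M}'$ as a main-memory-computation transition in $\delta_m$, leaving the external head idle at the home cell, in the spirit of the proof of Theorem~\ref{thrm:tlm-sim-tm}; the random-access power of $\mathcal{M}'$ on its main tape reproduces the random access of $\mathcal{M}$ (which is a RATM-TLM) without touching the external tape. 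For each $Read$ (resp. $Write$) of $\mathcal{M}$ at address $a$, $\mathcal{M}'$ writes $a+1$ on its external address tape, enters the matching $q_r$ (resp. $q_w$) state, lets the external head walk from the home cell to cell $a+1$ through $\delta_e$ transitions, performs the single-cell transfer against the main head, and then walks the external head back to the home cell via $\delta_e$ transitions before the next main-memory phase begins.

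Second I would check correctness by a routine induction on the steps of $\mathcal{M}$: after $\mathcal{M}'$ has finished simulating the $t$-th step of $\mathcal{M}$, the state and main-head position of $\mathcal{M}$ are recorded in $\mathcal{M}'$, the main tape of $\mathcal{M}'$ equals the main tape of $\mathcal{M}$, and the unshifted reading of the external tape of $\mathcal{M}'$ equals the external tape of $\mathcal{M}$; hence $\mathcal{M}'$ accepts $x$ iff $\mathcal{M}$ does, and they compute the same function $f$. The only transitions in which the external head of $\mathcal{M}'$ ever moves are the $\delta_e$ transitions inside an IO phase, so the external access trace complexity of $\mathcal{M}'$ on $x$ equals the sum, over the $IO_{\mathcal{M}}(x)$ IO operations of $\mathcal{M}$, of the round-trip distance between the home cell and the addressed cell.

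Finally, since every $\mathcal{M}$-address $a\ge 0$ maps to cell $a+1\ge 1\neq 0$, each such round trip contributes at least $2$ external-head moves, so the external access trace complexity of $\mathcal{M}'$ on $x$ is at least $2\,IO_{\mathcal{M}}(x)$, i.e. $\Omega(IO(n))$, which is the claim. I expect the one genuinely delicate point to be this last step: a priori a ``lazy'' simulation could serve some $Read$ or $Write$ with the external head already parked on the target cell, spending zero moves, so that no per-IO lower bound is available; the home-cell convention together with the $+1$ offset (which guarantees that no IO ever addresses the home cell) is precisely the device that forces at least one move per simulated IO and thereby makes the $\Omega(IO(n))$ bound hold. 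A minor additional check is that the steps are counted correctly under Definition~\ref{def:ratm-bio} --- writing the address happens in a main-memory state and costs no external move, while the walk and the transfer occur entirely in external-memory-access states, with the blocking alternation of $Q_m$ and $Q_e$ respected --- so that the external access trace complexity counts exactly the external head's displacement; this is immediate from the definition but worth stating explicitly.
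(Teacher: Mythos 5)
Your construction is correct as a simulation and does yield a trace of at least $2\,IO_{\mathcal{M}}(x)$, but it reaches the bound by a genuinely different route than the paper. The paper's (one-sentence) argument is that the $\Omega(IO(n))$ lower bound \emph{is obtained by doing IO operations on consecutive external cells}: in the most favorable layout each of the $IO(n)$ operations still moves the external head at least one cell to reach the next (adjacent) target, so the trace cannot drop below $\Omega(IO(n))$, and consecutive access shows this floor is actually attained. You instead engineer the bound into the simulator: a reserved home cell plus a $+1$ address offset forces a round trip of length at least $2$ per IO. What your approach buys is a clean, checkable per-IO accounting that is immune to the ``lazy head already parked on the target'' objection you correctly identify; what it costs is tightness. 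Your simulator's trace is $2\sum_i(a_i+1)$ over the addressed cells $a_i$, which for a generic access pattern is $\Theta(IO(n)\cdot S(n))$ --- i.e., it sits at the \emph{upper} bound of Theorem~\ref{thrm:trace-ub} rather than near the lower one. Since the point of pairing Theorems~\ref{thrm:trace-lb} and~\ref{thrm:trace-ub} (see the discussion following them) is that the trace complexity ranges from $\Omega(IO(n))$ in the best case (consecutive access) to $O(IO(n)\,S(n))$ in the worst, a simulation that never achieves the best case witnesses the inequality but misses the message.

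One further caution: under the purely existential reading ``there is some RATM-BIO whose trace is at least $\Omega(IO(n))$,'' the theorem is nearly vacuous --- any simulator can be padded with wasted moves --- and your home-cell device, precisely because it \emph{deliberately} forces moves, makes the proof read as an instance of that vacuous statement. If you want the result to carry the intended content, argue instead about the natural (shift-free) simulation under a consecutive-cell access pattern, as the paper does: there each IO contributes exactly one external-head move, giving both the $\Omega(IO(n))$ lower bound and its attainability at once.
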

\begin{proof}
=	This lower bound is obtained by doing IO operations on consecutive external cells.
	\qed
\end{proof}

\begin{theorem}\label{thrm:trace-ub}
	For a $M$-memory $(T(n),IO(n))$-time $S(n)$-space RATM-TLM $\mathcal{M}$, a RATM-BIO $\mathcal{M}'$ can simulate $\mathcal{M}$ with $O(IO(n)\cdot S(n))$ external access trace complexity.
\end{theorem}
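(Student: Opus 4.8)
The plan is to simulate the RATM-TLM $\mathcal{M}$ by a RATM-BIO $\mathcal{M}'$ directly, mimicking every transition, and to bound the cost of each simulated IO operation by $O(S(n))$ moves of the external head. The main memory tape, address tape and random-access behaviour of $\mathcal{M}'$ carry over verbatim from $\mathcal{M}$, so the main memory computation transitions of $\mathcal{M}$ are reproduced one-for-one by main memory computation transitions in $Q_m$, and these contribute nothing to the external access trace complexity since the external head halts during them. The only thing that needs care is how a single $Read$ or $Write$ oracle call of $\mathcal{M}$, which is free of head movement by assumption, gets implemented by the physically moving external head of $\mathcal{M}'$.

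First I would describe the layout: $\mathcal{M}'$ stores the contents of $\mathcal{M}$'s external tape on its own external tape in the same cell-for-cell correspondence, using at most $S(n)$ cells. When $\mathcal{M}$ enters a $Read$ (resp.\ $Write$) state with address $addr$ on the address tape, $\mathcal{M}'$ enters the corresponding read (resp.\ write) state: the main head writes $addr$ and halts, and the external head must travel from its current position to cell $addr$. Since both the current position and $addr$ lie within the $O(S(n))$ cells that are ever touched, this traversal takes $O(S(n))$ moves. Upon arrival, $\mathcal{M}'$ performs the single-cell copy in the appropriate direction (exactly as the RATM-BIO semantics prescribe), then hands control back to the main head. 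To keep bookkeeping simple one may, after each IO operation, return the external head to a fixed home cell (say cell $0$), which only doubles the move count and keeps it $O(S(n))$ per IO operation. Thus each of the $IO(n)$ IO operations of $\mathcal{M}$ costs $O(S(n))$ external head moves in $\mathcal{M}'$, for a total external access trace complexity of $O(IO(n)\cdot S(n))$, as claimed.

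The correctness argument is the routine induction on the number of steps of $\mathcal{M}$: at every point where $\mathcal{M}$ is in a main memory computation state, the configuration of $\mathcal{M}'$ (main tape contents, main head position, control state, external tape contents) agrees with that of $\mathcal{M}$; the IO transitions preserve this because the external head of $\mathcal{M}'$, wherever it wanders, only reads or overwrites the single cell $addr$ exactly as $\mathcal{M}$'s oracle does. One should also remark that the time complexity is preserved up to the stated bound — this is essentially the same accounting as in Theorem~\ref{thrm:tm-sim-tlm}, since writing the address costs $O(\log S(n))$ and the traversal $O(S(n))$ — but since the theorem only asserts the bound on the external access trace complexity, this is a side remark.

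I do not expect a genuine obstacle here; the lemma is the natural "upper bound" companion to Theorem~\ref{thrm:trace-lb}, and the only thing to be slightly careful about is making sure the external head never needs to move outside the $O(S(n))$ active region — which follows because $\mathcal{M}$ never addresses a cell outside its $S(n)$-cell external tape footprint — and that the home-cell reset trick (or an equivalent argument) is enough to charge each IO operation $O(S(n))$ rather than having to track the drifting head position across many IO operations. If one wanted a sharper bound one could avoid the reset and argue that consecutive addresses issued by $\mathcal{M}$ are each within the active region so each hop is still $O(S(n))$; either way the bound in the statement follows.
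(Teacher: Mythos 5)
Your proposal is correct and follows the same approach as the paper's (much terser) proof: the key observation in both is that each of the $IO(n)$ oracle calls of $\mathcal{M}$ is implemented by the external head of $\mathcal{M}'$ traversing at most $O(S(n))$ cells, since the active region of the external tape has size $O(S(n))$. Your write-up simply supplies the simulation details and the home-cell bookkeeping that the paper leaves implicit.
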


\begin{proof}
	The upper bound on the external access trace complexity is obtained by the following extreme situation. For each IO operation of $\mathcal{M}$, the external head passes through $O(S(n))$ cells to get the destination cell. 
	\qed
\end{proof}

According to theorem \ref{thrm:trace-lb} and \ref{thrm:trace-ub}, the external access trace complexity of an external algorithm varies significantly using different access pattern of the external tape. It is important to make IO operations on consecutive cells, rather than random accessing. This is widely acknowledged in the research of IO-efficient algorithms, and the above two theorems give a theoretical evidence. 

\section{Usage of proposed models}\label{sec:usage}
\subsection{Usage of TM-TLM}
Here we provide an example to show the usage of TM-TLM on solving FPT problems.

\begin{example}
	The $k$-Vetex Cover problem is known to be FPT. Consider the following fixed-parameter tractable algorithm for $k$-vertex cover. Enumerate all possible subsets of the vertex set with size $k$, then verify whether the subset is a valid vertex cover. This algorithm can be simulated on a $(k+1)$-memory TM-TLM in $(kn^k,kn^k)$-time  as follows. Enumerate the subset in the main memory using $k$ cells, and read the edges from the external memory iteratively into the last cell of the main memory to verify the vertex cover. Note that the main memory usage is less than $poly(g(k))$, and the time and IO complexity is actually less than $O(g(k)^2n^{g(k)})$ given in Theorem \ref{thrm:fpt-io-simulate}, where $g(k)=k$ here.
\end{example}

\subsection{Usage of RATM-TLM}
Now we present an example that needs both the ability of random access in main memory and external memory to achieve sub-linear time complexity and sub-linear IO complexity, which is searching on an external $B^+$-tree.

\begin{example}
	We present a $k$-memory, $(\log_2{k}\log_k{n},k\log_k{n})$-time TM-TLM $\mathcal{M}$ that simulates the procedure of searching on a $B^+$-tree, where $k$ is the size of a node of the $B^+$-tree. The input is a collection $\mathcal{D}$ of data, and a $B^+$-tree $T$ built for the data, which are stored in the external memory of $\mathcal{M}$ at the beginning of computation. Denote $|\mathcal{D}|=n$, then the height of the $B^+$-tree is $O(\log_k{n})$, and the total size of the $B^+$-tree is $O(n)$. 
	
	When the computation starts, $\mathcal{M}$ reads the first $k$ cells of the $B^+$-tree which are the search keys stored in the root node, using $k$ IO operations. Then $\mathcal{M}$ conducts binary search in the main memory and finds the address of the next node to read. Note that the binary search in main memory must use the random access ability, which can not be achieved by classical Turing machine transitions. This search procedure is recursively executed until a leaf node is reached. Finally $\mathcal{M}$ gets the address of the data from the leaf node of the $B^+$-tree and use it to retrieve the data stored in the external memory. It can be verified that the number of IO operations is $O(k\log_k{n})$, and the number of main memory operations is $O(\log_2{k}\log_k{n})$.
\end{example}

\subsection{Usage of RATM-BIO}
The problem studied  on RATM-BIO is the well known depth-first-search (DFS) on directed graphs. We adopt the semi-external setting, where the size of the main memory is set to $O(|V|)$. Algorithm \ref{alg:dfs-random-access} gives an main memory version of DFS.

\begin{algorithm}[H]
	\caption{DFS-Random Access}\label{alg:dfs-random-access}
	\KwIn{Graph $G=(V,E)$}
	\ForEach{vertex u}{free[u]=1\;}
	\ForEach{vertex u}{
		head=0\;
		stack[head]=u\;
		\While{head$\ge0$}{
			v=stack[head]\;
			head--\;
			\If{free[v]}{
				free[v]=0\;
				\ForEach(\tcp*[f]{Random access IO}){neighbor w of v}{\label{line:dfsr:random-io}
					\If{free[v]}{
						head++\;
						stack[head]=w\;
					}
					
				}	
			}
		}
	}
\end{algorithm}

To execute Algorithm \ref{alg:dfs-random-access} on  RATM-BIO, we need two tapes for the main memory computation. One tape stores the $free[v]$ array using $|V|$ cells, and the other tape simulates the stack. It can be verified that the size of the stack does not exceed $O(|V|)$, which satisfies the requirement of semi-external setting.

\begin{theorem}
	If the edges of the input graph $G=(V,E)$ is given on the external tape as adjacent lists, then the worst case external access trace complexity of Algorithm \ref{alg:dfs-random-access} on RATM-BIO is $O(|V||E|)$.
\end{theorem}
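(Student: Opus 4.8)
The plan is to charge external-head movement to the work the algorithm does and then sum. The key structural fact is that each vertex is \emph{expanded} — meaning the body guarded by \texttt{if free[v]}, which sets \texttt{free[v]} to $0$ and scans the neighbours of $v$ — at most once: the first time $v$ is popped with \texttt{free[v]}$=1$ the flag is cleared, so every later pop of $v$ executes no further steps and, in particular, touches the external tape not at all. Since the arrays \texttt{free[]} and \texttt{stack[]} reside on the main-memory tapes and enumerating the vertices in the two outer \texttt{foreach} loops is pure main-memory counting, the expansions are the \emph{only} events that can move the external head. Hence it suffices to bound the external-head movement caused by a single expansion and multiply by $|V|$.

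First I would analyse one expansion of a vertex $v$. Reading the adjacency list of $v$ consists of a \emph{seek} — the external head walks from wherever it currently sits to the first cell of $v$'s list — followed by a \emph{scan} of the $\deg(v)$ list entries, which are stored on consecutive cells, so the scan costs $O(\deg(v))$ head moves. The entire adjacency-list representation occupies $O(|V|+|E|)$ consecutive cells of the external tape (a header per vertex plus a record per edge endpoint), so the seek costs $O(|V|+|E|)$ moves no matter where the head started. Thus one expansion of $v$ contributes $O(|V|+|E|+\deg(v))$ external moves. Summing over the at most $|V|$ expansions gives $\sum_{v} O(|V|+|E|+\deg(v)) = O(|V|^2 + |V||E| + |E|) = O(|V|(|V|+|E|))$, and under the standing assumption $|E| = \Omega(|V|)$ (the graph is connected, or one simply pads $|E|$ by $|V|$ cells in the layout) this is $O(|V||E|)$, the claimed bound.

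To justify the word ``worst case'' I would also exhibit a matching instance. Take $G$ to be the path $v_1 - v_2 - \cdots - v_{|V|}$, so that DFS started from $v_1$ discovers the vertices in the order $v_1, v_2, \dots, v_{|V|}$, and lay the adjacency lists on the external tape alternately from the two ends (the list of $v_1$ at the far left, of $v_2$ at the far right, of $v_3$ next to the left, and so on). Then the seek performed just before expanding $v_{i+1}$ crosses $\Omega(|V|)$ cells for a constant fraction of the $i$'s, so the external access trace complexity is $\Omega(|V|^2) = \Omega(|V||E|)$ on this family.

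The main obstacle is not the arithmetic but the bookkeeping about the RATM-BIO model: one must argue carefully that a re-pop of an already-visited vertex and all of the explicit-stack manipulations are main-head transitions that leave the external head stationary (so they contribute nothing to the trace), and one must make the contiguity of each vertex's adjacency list, and the $O(|V|+|E|)$ total length of the external representation, explicit so that a seek is bounded by that total length rather than by something larger. A secondary nuisance is the degenerate regime $|E| = o(|V|)$, where $O(|V||E|)$ should really be read as $O(|V|(|V|+|E|))$; I would dispose of this by stating the $|E| = \Omega(|V|)$ convention at the outset.
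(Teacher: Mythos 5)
Your proposal is correct and follows essentially the same route as the paper: charge each vertex's single expansion a seek bounded by the extent of the external representation plus a scan of $\deg(v)$ consecutive cells, and sum over the vertices — this is exactly the paper's $\sum_{v\in V}(d_v+|E|/2)=O(|V||E|)$ computation. Your write-up is somewhat more careful (it isolates the once-per-vertex expansion argument, flags the $|E|=o(|V|)$ degenerate regime where the bound should read $O(|V|(|V|+|E|))$, and supplies an explicit tight instance), but these are refinements of the same argument rather than a different proof.
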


\begin{proof}
	According to Algorithm \ref{alg:dfs-random-access}, the only IO operation is incurred by Line \ref{line:dfsr:random-io}, which reads the adjacent list of each node in an unpredictable order. In such way, there always exists a bad access order of the nodes and a bad allocation of the adjacent lists on the external tape, which forces the external head to move for a distance of $\Theta(|E|/2)$ to access the next adjacent list.  Figure \ref{fig:demo-bad-access} demonstrates the idea.
	\begin{figure}[t]
		\centering
		\includegraphics[width=0.7\textwidth]{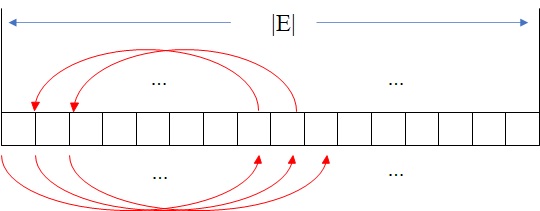}
		\caption{Demonstration of a bad access pattern}\label{fig:demo-bad-access}
	\end{figure}
	
	Thus, the worst case external access trace complexity is $\sum\limits_{v\in V} \{d_v +|E|/2\}=O(|V||E|)$.
	\qed
\end{proof}

\begin{theorem}
	If the edges of the input graph $G=(V,E)$ is given on the external tape in an \textit{arbitrary order}, then the worst case external access trace complexity of Algorithm \ref{alg:dfs-random-access} on RATM-BIO is $O(|E|^2)$.
\end{theorem}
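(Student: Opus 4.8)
The plan is to follow the template of the preceding theorem but to exploit the fact that an \emph{arbitrary} layout destroys all locality, so that each adjacency-list read now costs a full scan of the external edge region rather than a ``jump of length $|E|/2$ plus $d_v$''. As before, the only IO-inducing instruction is the inner loop of Line~\ref{line:dfsr:random-io}, which enumerates the neighbours of the vertex $v$ currently being expanded; every move of the external head is charged to one of these enumerations, so it suffices to bound how many such enumerations occur and how far the head travels in each.

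First I would bound the number of adjacency-list reads. The array $free$ behaves monotonically: once $free[v]$ is set to $0$ it stays $0$, and Line~\ref{line:dfsr:random-io} is reached for $v$ only in the iteration in which $free[v]$ flips from $1$ to $0$; a vertex popped while already marked is skipped without any IO. Hence, over the whole run (including all iterations of the outer \textbf{ForEach}), Line~\ref{line:dfsr:random-io} is executed at most $|V|$ times, once per vertex. Second I would bound the cost of a single read. The $d_v$ edges incident to $v$ may be placed by the adversary anywhere inside the $\Theta(|E|)$-cell block that stores the edge multiset; since the machine has no index locating these cells, nor any way to know it has seen the last of them, the external head must sweep the whole block, i.e.\ make $\Theta(|E|)$ moves, which is therefore the worst-case cost of reading $v$'s list. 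Multiplying gives $O(|V|\cdot|E|)$, and under the standing convention of the semi-external setting that the graph has no isolated vertices (so $2|E|=\sum_{v} d_v\ge|V|$) this is $O(|E|^2)$; equivalently, one may charge each $\Theta(|E|)$-long sweep against the incidences it discovers and use $\sum_{v\in V} d_v = 2|E|$.

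For tightness — paralleling what the previous theorem illustrated with Figure~\ref{fig:demo-bad-access} — I would exhibit a family with $|V|=\Theta(|E|)$, say a simple path or cycle, together with (i) an adversarial permutation of its edges on the external tape and (ii) a DFS exploration order, such that each of the $\Theta(|E|)$ adjacency-list reads starts the external head at one end of the edge block and forces it to the other end, yielding $\Omega(|E|^2)$ total movement. The main obstacle is showing that the adversary can make \emph{all} $\Theta(|E|)$ reads expensive simultaneously: once a read has dragged the head across the block, the next expanded vertex's edges must again lie far from the head's new position, and this must hold for the fixed physical layout throughout. I would handle it by a potential/adversary argument — placing the edge-sets of consecutively expanded vertices at alternating ends of the block and interleaving the rest — co-designing the permutation with the visiting order, and I would also make precise there the ``no isolated vertices'' (equivalently $|V|=O(|E|)$) hypothesis under which the clean $O(|E|^2)$ form holds.
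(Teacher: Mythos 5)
Your final bound is the right one, but the accounting that produces it does not match the algorithm this theorem is about, and your intermediate claim is actually false under the intended model. The paper charges the IO per \emph{edge}: Line~\ref{line:dfsr:random-io} is annotated as random-access IO, so each of the $\sum_v d_v=\Theta(|E|)$ neighbour fetches is a separate jump of the external head to the (arbitrarily placed) cell holding that edge, and each such jump can cost $\Theta(|E|)$ moves; multiplying gives $O(|E|^2)$. You instead charge per \emph{vertex}: you assert that retrieving $v$'s list forces one sweep of the whole $\Theta(|E|)$-cell edge region, hence $|V|$ sweeps and $O(|V||E|)$ total, which you then weaken to $O(|E|^2)$ via $|V|=O(|E|)$. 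The sweep is a different implementation of Line~\ref{line:dfsr:random-io} --- essentially the IO-efficient scanning discipline that Algorithm~\ref{alg:dfs-edgebyedge} is later introduced to exemplify --- not the per-neighbour random access the theorem is analysing. Under the paper's model your intermediate bound $O(|V||E|)$ is simply not an upper bound: for a dense graph the trace can reach $\Theta(|E|^2)=\omega(|V||E|)$, since expanding $v$ costs up to $d_v\cdot\Theta(|E|)$, not $\Theta(|E|)$. Moreover, if $O(|V||E|)$ \emph{were} the correct worst case here, this theorem would collapse into the preceding one (adjacency-list layout), erasing exactly the distinction the two theorems are meant to draw.

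The fix is small: keep your (correct) observation that each vertex's list is read only once, but bound the cost per individual neighbour read rather than per list. There are at most $\sum_{v}d_v=2|E|$ read operations on Line~\ref{line:dfsr:random-io}, and an adversarial placement of the edges together with a bad visiting order forces the head to travel $\Theta(|E|)$ between consecutive reads, giving $O(|E|^2)$ directly and without any ``no isolated vertices'' hypothesis. Your proposed tightness construction (alternating the edge sets of consecutively expanded vertices between the two ends of the block) is a reasonable way to make the adversary argument concrete --- the paper only gestures at this --- but it should likewise be phrased per edge read, not per sweep.
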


\begin{proof}
	Given the situation that the edges are stored in an arbitrary order, the neighbors of node $v$ can not be retrieved by consecutive IO operations when executing Line \ref{line:dfsr:random-io}. After reading one edge, The external head must move a long distance to get the next edge. Using the same idea in proving the above theorem, there always exists a bad access order to the edges and a bad allocation of the edges on the external tape, which forces the external head to move for distance of $\Theta(|E|)$ between two read operation. Multiplying $|E|$ read operations and $O(|E|)$ moves for each read operation, the result of this theorem is proved.
	\qed
\end{proof}

The above two theorems show the importance of data arrangement on the external memory to avoid random access. It also reveals the importance to design IO-efficient algorithms, whose performance is not affected by the data allocation on the external memory. Here we cite the following algorithm from \cite{sibeyn2002heuristics} given in Algorithm \ref{alg:dfs-edgebyedge}, which is a good example of IO-efficient algorithm. The performance of it is irrelevant of the allocation order of the edges.

\begin{algorithm}[t]
	\caption{DFS-IO-efficient}\label{alg:dfs-edgebyedge}
	\SetKwProg{Procedure}{Procedure}{\string:}{end}
	\KwIn{Graph $G=(V,E)$, memory size $M$}
	
	Initialize the spanning tree $T$ of graph $G$\;
	$\textit{update}\leftarrow \textbf{true}$\;
	\While{\textit{update}}{
		$T, \leftarrow {Restructure(G,T,M)}$\;
	}

	\Return $T$\;
	\Procedure{Restructure($G$, $T$, $M$)}{	$\textit{update} \leftarrow \textbf{false}$\;
		\For {any edge $(u,v)$ in $G$} {
			\If{$(u,v)$ is forward-cross edge}{
				$\textit{update} \leftarrow \textbf{true}$\;
				$w \leftarrow$ the parent of $v$ in $T$\;
				delete edge $(w,v)$ from $T$\;
				add edge $(u ,v)$ iton $T$\;
			}
		}
		\Return $(T,M)$\;
	}  
	
\end{algorithm}

\begin{theorem}\label{thrm:edgebyedge-scan-time}
	Even if the edges of the input graph $G=(V,E)$ is given on the external tape in an arbitrary order, the worst case external access trace complexity of Algorithm \ref{alg:dfs-edgebyedge} on RATM-BIO is $O(|V||E|)$.
\end{theorem}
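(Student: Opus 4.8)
The plan is to factor the bound into two independent estimates and multiply them: first, a single invocation of the \textit{Restructure} procedure costs only $O(|E|)$ external-head moves on the RATM-BIO; second, the \texttt{while} loop of Algorithm \ref{alg:dfs-edgebyedge} makes only $O(|V|)$ iterations. Together these give external access trace complexity $O(|V|\cdot|E|)$ (the one-time \textit{Initialize the spanning tree} step adds another $O(|E|)$, which is absorbed).

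For the first estimate I would exploit the semi-external setting $M=O(|V|)$. The spanning tree $T$ is stored in main memory as a parent array ($|V|$ cells), together with whatever $O(|V|)$-sized labelling (pre/post order numbers, or ancestor counters) is needed to decide in constant time per edge whether $(u,v)$ is a forward-cross edge; only the $|E|$ edge records are too large and hence reside on the external tape. The loop ``\textbf{for} any edge $(u,v)$ in $G$'' is therefore implemented as a single left-to-right scan of the external tape, moving the external head $O(|E|)$ cells; when a forward-cross edge is met, detaching $(w,v)$, attaching $(u,v)$ and setting \textit{update} are all main-memory edits and move the external head not at all. A further $O(|E|)$ moves rewind the external head for the next round. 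The decisive point is that a linear scan touches the cells of the edge list consecutively \emph{no matter in what order the edges were written down}; this is precisely why the ``arbitrary order'' hypothesis is harmless here, in sharp contrast with Algorithm \ref{alg:dfs-random-access}, where the same data layout forced $\Theta(|E|)$ head travel per adjacency-list lookup.

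For the second estimate I would invoke the termination analysis of the heuristic in \cite{sibeyn2002heuristics}: a round that does not already leave \textit{update}$=\mathbf{false}$ strictly advances a monotone potential $\Phi(T)$ of the current tree that takes only $O(|V|)$ distinct values, so there are at most $O(|V|)$ productive rounds followed by one idle round. A workable choice is the length of the longest prefix of the preorder vertex sequence of $T$ that stays frozen for the rest of the execution: every re-hanging of a forward-cross edge forces at least one further vertex permanently into its final DFS slot, so $\Phi$ goes up, and $\Phi=|V|$ certifies that $T$ is a genuine DFS tree, at which point the loop halts. Combining with the first estimate yields the claimed $O(|V|\cdot|E|)$.

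I expect the second estimate to be the real obstacle. Pinning ``each productive round makes progress'' onto a clean potential needs the exact invariant that \textit{Restructure} preserves on the induced DFS numbering, because across a single re-hanging neither the depth nor the preorder index of a fixed vertex behaves monotonically (a forward edge pulls $v$ upward, a cross edge pushes it downward). I would therefore lean on the invariant proved in \cite{sibeyn2002heuristics} rather than re-deriving it; the first estimate, by contrast, is routine once the semi-external bookkeeping is made explicit, and the handling of rewinds and initialization is immediate.
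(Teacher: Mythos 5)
Your proposal follows essentially the same route as the paper: bound the cost of one full scan of the edge list by $O(|E|)$ (the paper takes this for granted; you make the sequential-scan argument and the semi-external bookkeeping explicit) and bound the number of scans by $O(|V|)$ via a monotone potential whose detailed invariant is deferred to \cite{sibeyn2002heuristics}, exactly as the paper omits it. The only divergence is the choice of potential --- the paper claims the depth of the spanning tree grows by at least one per full scan, while you propose a frozen preorder prefix --- and this does not change the structure of the argument.
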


\begin{proof}
	Then it only need to prove the maximal number of times for scanning the edges. The depth of the initial spanning tree $T$ is at least 1, and we claim that after one full scan of the edge set, the depth grows at least 1. The detailed proof is omitted. Since the depth of the spanning tree is at most $|V|$, the edges are scanned for at most $|V|$ times. Thus the theorem is proved.
	\qed
\end{proof}

\section{Conclusion}\label{sec:conc}
In this paper we proposed three computation models to analyze the IO complexity deeply. The three models are TM-TLM, RATM-TLM and RATM-BIO, which model the behavior of the main memory and external memory in different granularity. Based on TM-TLM and RATM-TLM, the relationship between IO complexity and other complexity measures are deeply studied. Besides, the external access trace complexity defined based on RATM-BIO can reflect the different cost of different external memory access pattern, and provides a theoretical point of view about the IO-efficient algorithms. These new results provide a deep look into the IOt complexity, and open a new way to study the IO complexity. 

\bibliographystyle{splncs04}
\bibliography{library}

\begin{thebibliography}{10}
\providecommand{\url}[1]{\texttt{#1}}
\providecommand{\urlprefix}{URL }
\providecommand{\doi}[1]{https://doi.org/#1}

\bibitem{aggarwal1987model}
Aggarwal, A., Alpern, B., Chandra, A., Snir, M.: A model for hierarchical
  memory. In: Proceedings of the nineteenth annual ACM symposium on Theory of
  computing. pp. 305--314 (1987)

\bibitem{aggarwal1988input}
Aggarwal, A., Vitter, Jeffrey, S.: The input/output complexity of sorting and
  related problems. Communications of the ACM  \textbf{31}(9),  1116--1127
  (1988)

\bibitem{alexander1988process}
Alexander, W., Copeland, G.: Process and dataflow control in distributed
  data-intensive systems. ACM SIGMOD Record  \textbf{17}(3),  90--98 (1988)

\bibitem{alpern1994uniform}
Alpern, B., Carter, L., Feig, E., Selker, T.: The uniform memory hierarchy
  model of computation. Algorithmica  \textbf{12}(2),  72--109 (1994)

\bibitem{armen1996bounds}
Armen, C.: Bounds on the separation of two parallel disk models. In:
  Proceedings of the fourth workshop on I/O in parallel and distributed
  systems: part of the federated computing research conference. pp. 122--127
  (1996)

\bibitem{copeland1988data}
Copeland, G., Alexander, W., Boughter, E., Keller, T.: Data placement in bubba.
  In: Proceedings of the 1988 ACM SIGMOD international conference on Management
  of data. pp. 99--108 (1988)

\bibitem{downey2013fundamentals}
Downey, R.G., Fellows, M.R.: Fundamentals of parameterized complexity, vol.~4.
  Springer (2013)

\bibitem{elworth2020petabytes}
Elworth, R.L., Wang, Q., Kota, P.K., Barberan, C., Coleman, B., Balaji, A.,
  Gupta, G., Baraniuk, R.G., Shrivastava, A., Treangen, T.J.: To petabytes and
  beyond: recent advances in probabilistic and signal processing algorithms and
  their application to metagenomics. Nucleic acids research  \textbf{48}(10),
  5217--5234 (2020)

\bibitem{gao2020recognizing}
Gao, X., Li, J., Miao, D., Liu, X.: Recognizing the tractability in big data
  computing. Theoretical Computer Science  \textbf{838},  195--207 (2020)

\bibitem{gorton2008data}
Gorton, I., Greenfield, P., Szalay, A., Williams, R.: Data-intensive computing
  in the 21st century. Computer  \textbf{41}(4),  30--32 (2008)

\bibitem{johnston1998high}
Johnston, W.E.: High-speed, wide area, data intensive computing: A ten year
  retrospective. In: Proceedings. The Seventh International Symposium on High
  Performance Distributed Computing (Cat. No. 98TB100244). pp. 280--291. IEEE
  (1998)

\bibitem{kouzes2009changing}
Kouzes, R.T., Anderson, G.A., Elbert, S.T., Gorton, I., Gracio, D.K.: The
  changing paradigm of data-intensive computing. Computer  \textbf{42}(1),
  26--34 (2009)

\bibitem{maheshwari2003survey}
Maheshwari, A., Zeh, N.: A survey of techniques for designing i/o-efficient
  algorithms. In: Algorithms for Memory Hierarchies, pp. 36--61. Springer
  (2003)

\bibitem{o1996log}
O’Neil, P., Cheng, E., Gawlick, D., O’Neil, E.: The log-structured
  merge-tree (lsm-tree). Acta Informatica  \textbf{33}(4),  351--385 (1996)

\bibitem{pagh2014triangle}
Pagh, R., Silvestri, F.: The input/output complexity of triangle enumeration.
  In: Proceedings of the 33rd ACM SIGMOD-SIGACT-SIGART symposium on Principles
  of database systems. pp. 224--233 (2014)

\bibitem{pagh2014matrix}
Pagh, R., St{\"o}ckel, M.: The input/output complexity of sparse matrix
  multiplication. In: European Symposium on Algorithms. pp. 750--761. Springer
  (2014)

\bibitem{pandis2021evolution}
Pandis, I.: The evolution of amazon redshift. Proceedings of the VLDB Endowment
   \textbf{14}(12),  3162--3174 (2021)

\bibitem{raju2020big}
Raju, B., Jumah, F., Ashraf, O., Narayan, V., Gupta, G., Sun, H., Hilden, P.,
  Nanda, A.: Big data, machine learning, and artificial intelligence: a field
  guide for neurosurgeons. Journal of neurosurgery  \textbf{1}(aop),  1--11
  (2020)

\bibitem{ruemmler1992unix}
Ruemmler, C., Wilkes, J.: UNIX disk access patterns. Hewlett-Packard
  Laboratories (1992)

\bibitem{sibeyn2002heuristics}
Sibeyn, J.F., Abello, J., Meyer, U.: Heuristics for semi-external depth first
  search on directed graphs. In: Proceedings of the fourteenth annual ACM
  symposium on Parallel algorithms and architectures. pp. 282--292 (2002)

\bibitem{simitci1998comparison}
Simitci, H., Reed, D.A.: A comparison of logical and physical parallel i/o
  patterns. The International Journal of High Performance Computing
  Applications  \textbf{12}(3),  364--380 (1998)

\bibitem{ullman1991input}
Ullman, J.D., Yannakakis, M.: The input/output complexity of transitive
  closure. Annals of Mathematics and Artificial Intelligence  \textbf{3}(2),
  331--360 (1991)

\bibitem{vitter2001external}
Vitter, J.S.: External memory algorithms and data structures: Dealing with
  massive data. ACM Computing surveys (CsUR)  \textbf{33}(2),  209--271 (2001)

\bibitem{zhang2003dram}
Zhang, Y.Q.: Dram (h): A parallel computation model for high performance
  numerical computing. CHINESE JOURNAL OF COMPUTERS-CHINESE EDITION-
  \textbf{26}(12),  1660--1670 (2003)

\end{thebibliography}



\end{document}